\newtheorem{theorem}{Theorem}[section]
\newtheorem{definition}{Definition}[section]
\newtheorem{lemma}{Lemma}[section]
\newtheorem{claim}{Claim}[section]
\newenvironment{proofof}[1]{{\vspace*{5pt} \em Proof of #1:  }}{\hfill\rule{2mm}{2mm}\vspace*{5pt}}
\newcommand{\bR}{{\mathbb{R}}}
\newcommand{\MMS}{\mathsf{MMS}}
\newcommand{\PROP}{\mathsf{PROP}}
\newcommand{\WPROP}{\mathsf{WPROP}}
\newcommand{\APS}{\mathsf{APS}}
\newcommand{\bX}{\mathbf{X}}
\newcommand{\cR}{\mathcal{R}}
\newcommand{\cI}{\mathcal{I}}
\newcommand{\POF}{\textsf{PoF}\xspace}
\newcommand{\opt}{{\rm opt}}
\renewcommand{\sc}{{\rm sc}}
\begin{document}

\title{Almost (Weighted) Proportional Allocations for Indivisible Chores\thanks{The authors thank Haris Aziz, Herv\'e Moulin and Warut Suksompong for their valuable comments and suggested related works.}}


\author{Bo Li$^1$ \hspace{30pt} Yingkai Li$^2$ \hspace{30pt} Xiaowei Wu$^3$\\
$^1$Department of Computing, The Hong Kong Polytechnic University\\
\texttt{comp-bo.li@polyu.edu.hk}\\
$^2$Department of Computer Science, Northwestern University\\
\texttt{yingkai.li@u.northwestern.edu} \\
$^3$IOTSC, University of Macau\\
\texttt{xiaoweiwu@um.edu.mo}}

\maketitle

\begin{abstract}
In this paper, we study how to fairly allocate $m$ indivisible chores to $n$ (asymmetric) agents. 
	We consider (weighted) {\em  proportionality up to any item} (PROPX), and show that a (weighted) PROPX allocation always exists and can be computed efficiently. 
	For chores, we argue that PROPX might be a more reliable relaxation for proportionality by the facts that any PROPX allocation ensures 2-approximation of maximin share (MMS) fairness [Budish, 2011] for symmetric agents and of anyprice share (APS) fairness [Babaioff et al, 2021] for asymmetric agents. 
	APS allocations for chores have not been studied before the current work, and our result implies a 2-approximation algorithm. 
	Another by-product result is that an EFX and a weighted EF1 allocation for indivisible chores exist if all agents have the same ordinal preference, which might be of independent interest. 
	We then consider the partial information setting and design algorithms that only use agents' ordinal preferences to compute approximately PROPX allocations. 
	Our algorithm achieves 2-approximation for both symmetric and asymmetric agents, and the approximation ratio is optimal.
	Finally we study the price of fairness (PoF), i.e., the loss in social welfare by enforcing allocations to be (weighted) PROPX.
	We prove that the tight ratio for PoF is $\Theta(n)$ for symmetric agents and unbounded for asymmetric agents.
	

\end{abstract}

\maketitle

\section{Introduction}

Fairness has drawn an increasing concern in broader areas including but not limited to mathematics, economics, and computer science. 
A fundamental problem is centered around allocating a set of resources (goods with non-negative utilities) or tasks (chores with non-positive utilities) to a number of agents in a way such that all agents are satisfied.
To capture the agents' preference for the allocation, people mostly study two solution concepts, {\em envy-freeness} \cite{foley1967resource} and {\em proportionality} \cite{steihaus1948problem}. 
Informally, an allocation is regarded as envy-free if nobody wants to exchange her items with any other agent in order to increase her utility.
Proportionality is weaker, 
which only requires that each agent has no smaller utility than her proportional share of all items. 
The traditional study of fair allocation mostly focused on divisible items (such as land and clean water),
where an envy-free or proportional allocation exists \cite{alon1987splitting,edward1999rental,conf/focs/AzizM16}.
However, the problem becomes trickier when the items are indivisible, due to the fact that an exact envy-free or proportional allocation is not guaranteed;
for example, consider the situation of allocating one item to two agents.

Accordingly, for indivisible items, an extensively studied subject is to investigate the extent to which the relaxations of these fairness definitions can be satisfied by either designing (approximation) algorithms or identifying hard instances to show the inherit difficulty of the problem so that no algorithm can be better than a certain performance.
For envy-freeness, two widely studied relaxations are {\em envy-free up to one item} (EF1) \cite{conf/sigecom/LiptonMMS04} and {\em envy-free up to any item} (EFX) \cite{journals/teco/CaragiannisKMPS19},
which were first proposed for goods.
EF1 allocations require that the envy disappears after removing one item from a bundle.
On the positive side, it is shown in \cite{conf/sigecom/LiptonMMS04} and \cite{DBLP:journals/corr/abs-2012-06788} that an EF1 allocation is guaranteed to exist and can be found efficiently.
On the negative side, some EF1 allocations can be relatively unfair even when the instance admits a fairer allocation.
EFX is proposed to improve fairness guarantee, where the envy is eliminated after removing any item from that bundle. 
Though EFX is fairer, it is still unknown whether such an allocation exists or not, except for some special cases.
Regarding proportionality, maximin fairness (MMS) \cite{conf/bqgt/Budish10} is arguably one of the most extensively studied relaxation.
It has been shown that for goods~\cite{journals/jacm/KurokawaPW18} and chores~\cite{conf/aaai/AzizRSW17}, an exact MMS allocation is not guaranteed to exist, but constant approximation algorithms are known.
Proportional up to one item (PROP1) allocation, which is weaker than EF1 and thus inherits many good properties of EF1 allocations, is studied in \cite{conf/sigecom/ConitzerF017,journals/orl/AzizMS20}.
Unfortunately, it is shown in \cite{journals/orl/AzizMS20} that a proportional up to any item  (PROPX) allocation is not guaranteed to exist for indivisible goods.

We follow this research trend and study PROPX allocations for indivisible chores with additive valuations. 
We attempt to provide a relatively extensive understanding for PROPX, including existence, computation, and price of fairness. 
Moreover, we are interested in two advanced settings, namely, {\em allocation with asymmetric agents} and {\em approximation with ordinal preferences}.

First, in many practical scenarios, agents do not have the same share (endowment for the case of goods and obligation for chores) to the system.
For example, people at leadership positions may be liable to undertake more responsibility than others in a company. 
This scenario is called {\em asymmetric} or {\em weighted} case.
Algorithms for computing weighted EF1 or approximately weighted MMS allocations have been proposed in goods~\cite{journals/jair/FarhadiGHLPSSY19,conf/ijcai/0001C019,conf/atal/ChakrabortyISZ20}.
Recently, AnyPrice Share (APS) fairness is introduced in \cite{babaioff2021competitive}, and the authors argued that APS is more suitable than MMS for asymmetric agents.
Before the current work, approximately APS fair algorithms for chores have not been studied. 

Second, motivated by practical applications where it is hard for the algorithm to collect complete information on agent preferences, using partial information to approximately compute fair allocations has attracted much attention recently.
Ordinal information is a typical setting, where the algorithm only knows each agent's ranking on the items without cardinal values.
Using ordinal preferences to compute approximately MMS allocations has been studied for goods~\cite{DBLP:conf/ijcai/AmanatidisBM16,DBLP:journals/corr/abs-2105-10064} and chores~\cite{DBLP:journals/corr/abs-2012-13884}.

%


\subsection{Main Results}

In a system of allocating $m$ indivisible chores to $n$ agents, each agent $i$ has an obligation share $s_i\geq 0$ and $\sum_{i}s_i = 1$.
When agents are symmetric $s_i = \frac{1}{n}$.
Informally, an allocation is called PROPX if for any agent, by removing an arbitrary item from her bundle, her cost is no more than her share in the system. 
We argue that for indivisible chores, PROPX may be a more reliable relaxation of (weighted) proportionality than MMS and APS.
This is because the existence of MMS/APS allocation is not guaranteed even with three symmetric agents. 
However, we show that (weighted) PROPX allocations always exist. 
Moreover, any (weighted) PROPX allocation ensures 2-approximation of MMS for symmetric agents and of APS for asymmetric agents; however, an MMS or APS allocation can be as bad as $\Theta(n)$-approximation regarding PROPX. 

To compute PROPX allocations, we provide two algorithms and each of them has its own merits.
The first algorithm is based on the widely studied {\em envy-cycle elimination} technique \cite{conf/sigecom/LiptonMMS04,DBLP:conf/aaai/BarmanBMN18}, 
which is recently adapted to chores and called {\em Top-trading} envy-cycle elimination \cite{DBLP:journals/corr/abs-2012-06788}. 
It is proved in \cite{DBLP:journals/corr/abs-2012-06788} that the algorithm always returns an EF1 allocation. 
We prove a stronger property --- when the agents have the same ordinal preference for the items (which is called IDO instances~\cite{DBLP:journals/corr/abs-1907-04505}, short for identical ordering), the returned allocation is EFX (and thus PROPX) if we carefully select the items to be allocated in each step. 
Then we show a general reduction such that if we have an algorithm for IDO instances, we can convert it to the general instances while guaranteeing PROPX. 
Similar reductions are widely used in the computation of approximate MMS allocations~\cite{DBLP:journals/aamas/BouveretL16,DBLP:conf/sigecom/BarmanM17,DBLP:journals/corr/abs-1907-04505}. 
This argument holds for the weighted setting even when the agents do not have the same share. 
Finally, note that the envy-cycle elimination algorithm also ensures a $4/3$-approximation of MMS.
We summarize our results in this part as follows, which is proved in Theorem \ref{thm:envy-cycle}.

\medskip
\noindent
{\bf Result 1.1} (Simultaneous Fairness){\bf .}
{\em For every general instance, there is an allocation that is simultaneously PROPX and $4/3$-approximate MMS.
In addition, when the instance is IDO, the allocation is also EFX.}

\medskip

The second algorithm also cares about the efficiency of an allocation. 
To improve the social welfare, intuitively, each item should be allocated to the agent with lowest cost on that item.
We incorporate this idea to the design of our algorithm, {\em bid-and-take}, and show that it actually guarantees the tight approximation ratio to the optimal social cost subject to the PROPX constraint, i.e., the price of fairness. 
Moreover, we show that this algorithm can be easily extended to compute a {\em weighted} PROPX allocation for the weighted instances when the agents may have unequal obligation shares. 
Let $n$ be the number of agents and $m$ be the number of items.
We have the following result, which is proved in Theorems \ref{thm:bid-and-take}, \ref{thm:aps}, \ref{theorem:unweighted-pof}, \ref{theorem:weighted-pof}.

\medskip
\noindent
{\bf Result 1.2} (PROPX and PoF){\bf .}
{\em The allocation returned by the bid-and-take algorithm is (weighted) PROPX and 2-approximate APS.
It also achieves the tight approximation ratio to the optimal social cost subject to the PROPX constraint. 
The tight bound for the price of fairness (PoF) regarding PROPX is $\Theta(n)$ for unweighted case, $\Theta(m)$ for weighted IDO case and unbounded for weighted case.}

\medskip

Last but not least, following recent works \cite{DBLP:conf/ijcai/AmanatidisBM16,DBLP:journals/corr/abs-2105-10064,DBLP:journals/corr/abs-2012-13884}, we consider the partial information setting where the algorithm only knows agents' ordinal preferences without exact cardinal values. 
The intuition behind our algorithms is as follows. 
We partition agents into two sets so that each agent in the first set gets a single but large item, 
then a standard algorithm, such as (weighted) round-robin, is called on the agents in the second half to evenly allocate the remaining small items.
Although the idea of splitting agents into two parts looks artificial, the approximation ratio turns out to be tight.
A by-product result is that a weighted EF1 allocation exists for IDO instances. 
Accordingly, we obtain the following result, proved in Lemma \ref{lemma:ordinal-hardness} and Theorem \ref{thm:ordinal:weighetd}.

\medskip
\noindent
{\bf Result 2} (Ordinal Information){\bf .}
{\em With ordinal preferences, our algorithm achieves 2-approximate weighted PROPX for asymmetric agents.
Moreover, approximation ratio is optimal: no algorithm is able to achieve a better-than-2 approximation using only ordinal information, even in the unweighted case.}

\subsection{Other Related Works}

The definition of EFX allocation for goods is first proposed in \cite{journals/teco/CaragiannisKMPS19}, after which a lot of effort has been devoted to proving its existence. 
Currently, we only know that an exact EFX allocation exists for some special cases \cite{journals/siamdm/PlautR20,conf/sigecom/ChaudhuryGM20,journals/corr/abs-2102-10654,conf/ijcai/AmanatidisBFHV20}, and the general existence is still unknown. 
Particularly, it is shown in \cite{journals/siamdm/PlautR20} that an EFX allocation exists for IDO instances with goods. 
Our work complements this result by showing an EFX allocation also exists for IDO instances for chores. 
While most literature study the special yet important case where agents have equal entitlement or obligation share to the items, there is also a fast growing recent literature on the more general model in which agents may have arbitrary, possibly unequal shares. 
For example, \cite{journals/jair/FarhadiGHLPSSY19} and \cite{conf/ijcai/0001C019} adapt MMS to this setting for goods and chores, respectively, and designe approximation algorithms accordingly. 
\cite{babaioff2021competitive} and \cite{journals/corr/abs-2103-04304} provide different generalizations of MMS to this case.
A weighted EF1 allocation is shown to exist for goods in \cite{conf/atal/ChakrabortyISZ20} and unknown for chores.
Recently, AnyPrice Share (APS) fairness is introduced in \cite{babaioff2021competitive}, and the authors provide $\frac{3}{5}$-approximation algorithms for the case of goods.
Our work complements this research trend
by studying weighted PROPX, and its implication on APS allocation for chores.

Besides fairness, social welfare, which is actually a competing criterion to fairness, is another important measure to evaluate allocations.
The loss in social welfare by enforcing allocations to be fair is quantitatively measured by the {\em price of fairness}.
Bounding the price of fairness for goods and chores are widely studied in the literature \cite{journals/ior/BertsimasFT11,conf/wine/CaragiannisKKK09,journals/tcs/HeydrichS15,conf/ijcai/BeiLMS19,DBLP:conf/wine/BarmanB020,hohne2021allocating}.
In this paper, we study the price of fairness for indivisible chores under (weighted) PROPX requirement, and show that our algorithm achieves the optimal ratio.


\section{Model and Solution Concepts}
\label{sec:preliminaries}

We consider the problem of fairly allocating a set of $m$ indivisible chores $M$ to a group of $n$ agents $N$.
Each agent $i\in N$ has a cost function $c_i: 2^M \to \bR^+\cup \{0\}$.
The cost functions are assumed to be additive in the current work; that is, for any item set $S\subseteq M$, $c_i(S) = \sum_{j\in S}c_i(\{j\})$. When there is no confusion, we use $c_{ij}$ and $c_i(j)$ to denote $c_i(\{j\})$ for short. 
Without loss of generality, assume all the cost functions are normalized, i.e., $c_i(M) = 1$.
An allocation is represented by a partition of the items $\bX=(X_1,\cdots,X_n)$, where each each agent $i$ obtains $X_i$, and $X_i\cap X_j = \emptyset$ for all $i\neq j$ and $\cup_{i\in N}X_i = M$.
An allocation is called partial if $\cup_{i\in N}X_i \subseteq M$.
Let the social cost of the allocation $\bX$ be $\sc(\bX) = \sum_{i\in N} c_i(X_i)$.
We are particularly interested in a special setting, {\em identical ordering} (IDO). 
An instance is called identical ordering (IDO) if all agents agree on the same ranking of the items, i.e., $c_{i1}\geq \ldots \geq c_{im}$ for $i\in N$.
Note that in an IDO instance, the agents' cardinal cost functions can still be different. 




We next define {\em envy-freeness}, {\em proportionality} and their relaxations.
For ease of discussion, in this section, we focus on the case of symmetric agents, called {\em unweighted case}, and defer the definitions for asymmetric-agent case to Section~\ref{sec:bid-and-take}.

\begin{definition}[EF and PROP]
    An allocation $\bX$ is called envy-free (EF) if $c_i(X_i) \le c_i(X_j)$ for any $i,j\in N$.
    It is called proportional (PROP) if $c_i(X_i) \le \PROP_i$ for any $i\in N$, where $\PROP_i = (1/n) \cdot c_i(M)$ is agent $i$'s proportional share for all items.
\end{definition}

For normalized cost functions $\PROP_i = 1/n$, $\forall i\in N$.

    


\begin{definition}[EF1 and EFX]
    An allocation $\bX$ is called envy-free up to one item (EF1) if for any $i,j\in N$, there exists $e\in X_i$ such that $c_i(X_i \setminus \{e\}) \le  c_i(X_j)$.
    The allocation $\bX$ is called envy-free up to any item (EFX) if for any $i,j\in N$ and any $e\in X_i$, $c_i(X_i \setminus \{e\}) \le c_i(X_j)$.
\end{definition}

It is easy to see that any EFX allocation is EF1, but not vice versa. 
We adopt similar ideas to relax the definition of proportionality for allocating chores.

\begin{definition}[PROP1 and PROPX]
    For any $\alpha\geq 1$, an allocation $\bX$ is $\alpha$-approximate proportional up to one item ($\alpha$-PROP1) if for any $i\in N$, there exists $e\in X_i$ such that $c_i(X_i \setminus \{e\}) \le \alpha \cdot \PROP_i$.
    The allocation $\bX$ is $\alpha$-approximate proportional up to any item ($\alpha$-PROPX) if for any $i\in N$ and any $e\in X_i$, $c_i(X_i \setminus \{e\}) \le \alpha \cdot \PROP_i$.
    When $\alpha = 1$, allocation $\bX$ is PROP1 or PROPX, respectively.
\end{definition}

Similarly, any PROPX allocation is PROP1, but not vice versa. 
As we will see for any additive cost functions, 
PROPX allocations exist and can be found in polynomial time.
Thus we always focus on PROPX allocations, and all results directly apply for the weaker definition of PROP1.

Before presenting our results, we recall {\em maximin share fairness}, 
which is another widely adopted way to relax proportionality.

\begin{definition}[MMS]
    Let $\Pi(M)$ be the set of all $n$-partitions of $M$.
    For any agent $i \in N$, her maximin share (MMS) is defined as 
    \[
    \MMS_i = \min_{\bX\in \Pi(M)} \max_{j\in N} \{c_i(X_j)\}.
    \]
    For any $\alpha\geq 1$, allocation $\bX$ is called $\alpha$-approximate maximin share fair ($\alpha$-MMS) if $c_i(X_i) \le \alpha \cdot \MMS_i$ for all $i\in N$.
    When $\alpha = 1$, $\bX$ is MMS fair.
\end{definition}
It is not hard to verify the following inequality,
\begin{align}\label{eq:mms:bound}
    \MMS_i \ge \max\left\{\max_{j \in M}\{c_{ij}\}, \PROP_i\right\}, \forall i \in N.
\end{align}

\section{Warm-up: Unweighted Agents}
\label{sec;warmup}
\subsection{Properties of PROPX Allocations}
First, it is not hard to see that PROPX is strictly weaker than EFX.

\begin{lemma}\label{lem:efx implies propx}
Any EFX allocation is PROPX. 
\end{lemma}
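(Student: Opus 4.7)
The plan is to derive the PROPX bound for an arbitrary agent $i$ and an arbitrary item $e\in X_i$ by summing the EFX inequalities across all other agents, together with a trivial self-inequality, and then exploiting additivity and the fact that $\bX$ partitions $M$.

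Concretely, I would fix any $i\in N$ and any $e\in X_i$. The EFX hypothesis, applied with this $i$ and $e$ against each $j\in N\setminus\{i\}$, yields the $n-1$ inequalities
\[
c_i(X_i\setminus\{e\}) \;\le\; c_i(X_j) \qquad \text{for every } j\ne i.
\]
Additionally, since costs are nonnegative and additive, $c_i(X_i\setminus\{e\}) \le c_i(X_i)$ holds trivially. Adding these $n$ inequalities gives
\[
n\cdot c_i(X_i\setminus\{e\}) \;\le\; \sum_{j\in N} c_i(X_j) \;=\; c_i(M),
\]
where the equality uses additivity together with $\bX$ being a partition of $M$. Dividing by $n$ yields $c_i(X_i\setminus\{e\}) \le c_i(M)/n = \PROP_i$, which is exactly the PROPX condition. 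Since $i$ and $e$ were arbitrary, the allocation is PROPX.

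There is essentially no obstacle: the argument is a one-line averaging, and the only thing worth flagging is that the summation really does telescope to $c_i(M)$, which requires $\bX$ to be a (full) allocation rather than a partial one. I would therefore state the lemma for complete allocations only, or, equivalently, note that for partial allocations the same proof gives $c_i(X_i\setminus\{e\})\le c_i(\cup_j X_j)/n$, which still implies PROPX under the normalization $c_i(M)=1$ whenever the allocated set equals $M$.
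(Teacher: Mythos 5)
Your proof is correct and follows essentially the same averaging argument as the paper: sum the EFX inequalities $c_i(X_i\setminus\{e\})\le c_i(X_j)$ over all $j$ (the paper includes $j=i$ in the sum, where the inequality is the trivial one you note separately) to obtain $n\cdot c_i(X_i\setminus\{e\})\le c_i(M)$. Your remark about completeness of the allocation is a fair point of care but does not change the argument.
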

\begin{proof}
For any EFX allocation $\bX$ and any agent $i$, $c_i(X_i \setminus \{e\}) \le c_i(X_j)$ for all $e\in X_i$ and $j \in N$. 
Summing up the inequalities for all $j$, we have $n \cdot c_i(X_i \setminus \{e\}) \le c_i(M)$. Thus $\bX$ is PROPX.
\end{proof}


Next, we argue, through the two lemmas below, that PROPX might be a more reliable relaxation for proportionality than MMS in practice. 

\begin{lemma}\label{lem:propx imply 2mms}
Any PROPX allocation is 2-MMS. 
\end{lemma}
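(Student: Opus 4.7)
The plan is to use PROPX together with the two lower bounds on $\MMS_i$ recorded in inequality~(\ref{eq:mms:bound}), namely $\MMS_i \ge \PROP_i$ and $\MMS_i \ge \max_{j \in M} c_{ij}$, and simply split $c_i(X_i)$ into the ``bundle minus one item'' piece and the ``single item'' piece.

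Fix an agent $i$. If $X_i = \emptyset$ then $c_i(X_i) = 0$ and the claim is trivial, so assume $X_i$ is nonempty and pick any $e \in X_i$ (for concreteness, one could take the item with maximum cost, but the choice does not matter). First I would apply the PROPX hypothesis to this $e$ to get $c_i(X_i \setminus \{e\}) \le \PROP_i$, and then invoke $\PROP_i \le \MMS_i$ from~(\ref{eq:mms:bound}) to conclude $c_i(X_i \setminus \{e\}) \le \MMS_i$. Second, for the removed item I would use the other half of~(\ref{eq:mms:bound}), namely $c_i(e) \le \max_{j \in M} c_{ij} \le \MMS_i$. Adding the two inequalities yields
\[
c_i(X_i) = c_i(X_i \setminus \{e\}) + c_i(e) \le \MMS_i + \MMS_i = 2 \cdot \MMS_i,
\]
which is the desired $2$-MMS guarantee for agent $i$. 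Since $i$ was arbitrary, the allocation $\bX$ is $2$-MMS.

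There is essentially no obstacle here: the whole argument is a one-line decomposition, and the only ingredients are the PROPX definition and the two trivial lower bounds on $\MMS_i$. The only thing worth double-checking is the edge case $X_i = \emptyset$ (handled above) and making sure we do not implicitly assume $|X_i| \ge 2$ when invoking PROPX; PROPX as stated in the excerpt quantifies over all $e \in X_i$, so a single-item bundle is fine and the bound $c_i(X_i \setminus \{e\}) = 0 \le \PROP_i$ holds vacuously in that case.
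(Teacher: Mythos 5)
Your proof is correct. It differs from the paper's in one small but genuine way: the paper removes the \emph{minimum}-cost item $e_i \in X_i$ and, in the case $|X_i| \ge 2$, bounds the removed item by the remainder, $c_i(e_i) \le c_i(X_i \setminus \{e_i\}) \le \PROP_i$, so that both pieces are controlled by $\PROP_i$ alone; this yields the strictly stronger structural statement that every non-singleton PROPX bundle has cost at most $2 \cdot \PROP_i$ (the singleton case is then handled via $\MMS_i \ge \max_j c_{ij}$). You instead remove an arbitrary item and bound it directly by $\max_{j \in M} c_{ij} \le \MMS_i$, using both halves of inequality~\eqref{eq:mms:bound}. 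Your route is shorter and avoids the case split entirely, but it only gives $c_i(X_i) \le \PROP_i + \max_j c_{ij}$, which is $2\cdot\MMS_i$ but not necessarily $2\cdot\PROP_i$; the paper's extra care buys the sharper intermediate claim, which is the one witnessed as tight by their two-agents/two-items example. For the lemma as stated, both arguments are complete, and your handling of the empty and singleton bundles is fine.
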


\begin{proof}
We will prove a stronger argument here:
for any PROPX allocation $\bX$ and for any agent~$i$, either $|X_i| \le 1$ or $c(X_i) \le 2\cdot \PROP_i$ for any $e\in X_i$.
Then by Equation \eqref{eq:mms:bound}, \cref{lem:propx imply 2mms} holds. 
For any agent $i$, if $|X_i| \leq 1$, the claim holds trivially. 
If $|X_i| \ge 2$, letting $e_i = \min_{e\in X_i} \{c_i(e)\}$,
we have 
\begin{equation*}
	c_i(X_i\setminus \{e_i\}) \le \PROP_i,
\end{equation*}
and 
\[
c_i(e_i) \le c_i(X_i\setminus \{e_i\}) \le \PROP_i.
\]
Thus $c_i(X_i) = c_i(X_i\setminus \{e_i\}) +c_i(e_i) \le 2\cdot \PROP_i$.
\end{proof}

It is easy to verify that the approximation ratio in the lemma is tight.
Consider an instance with two identical agents and two identical items. 
Allocating both items to one of them is PROPX but only 2-MMS.



\begin{lemma}
\label{lem:mms-is-bad-propx}
There exists an MMS allocation that is $\Theta(n)$-PROPX.
\end{lemma}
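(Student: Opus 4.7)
The plan is to exhibit an explicit $n$-agent, $n$-item instance together with an MMS allocation whose PROPX approximation ratio is $\Omega(n)$. The design idea is to create an \emph{indivisible bottleneck} item that forces some agent's MMS value to be large (on the order of a constant), and then hand that agent instead a collection of many tiny items whose total cost matches her MMS. Since MMS only constrains the total cost of her bundle but PROPX only benefits from removing a single item, such a bundle is tight under MMS but heavily penalised under PROPX.

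Concretely, I would take $n$ agents and $n$ items $o_1,\ldots,o_n$. Set agent $1$'s costs to $c_1(o_1) = 1/2$ and $c_1(o_j) = \frac{1}{2(n-1)}$ for $j \ge 2$ (total $1$), and give every other agent $i \ge 2$ uniform costs $c_i(o_j) = 1/n$. The proposed allocation is $X_1 = \{o_2,\ldots,o_n\}$, $X_2 = \{o_1\}$, and $X_i = \emptyset$ for $i \ge 3$.

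The first step is to compute the MMS values. For agent $1$, any $n$-partition must place $o_1$ in some bundle, so the maximum bundle cost is at least $1/2$; isolating $o_1$ and putting each remaining item in its own singleton bundle achieves exactly $1/2$, so $\MMS_1 = 1/2$. For agents $i \ge 2$, the uniform cost structure gives $\MMS_i = 1/n$ (attained by singletons). The second step is to verify $\bX$ is MMS: $c_1(X_1) = (n-1) \cdot \frac{1}{2(n-1)} = \frac{1}{2} = \MMS_1$, $c_2(X_2) = \frac{1}{n} = \MMS_2$, and the empty bundles trivially satisfy MMS. The third step is the PROPX calculation: for any $e \in X_1$,
\[
c_1(X_1 \setminus \{e\}) \;=\; \frac{1}{2} - \frac{1}{2(n-1)} \;=\; \frac{n-2}{2(n-1)},
\]
while $\PROP_1 = 1/n$, so the PROPX ratio is at least $\frac{n(n-2)}{2(n-1)} = \Theta(n)$.

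I do not expect a significant obstacle; the construction is short and the verifications are elementary arithmetic. The only subtlety is making sure the allocation is MMS for \emph{every} agent, not just the one suffering under PROPX — this is why agent $2$ is given a single item whose cost exactly matches her MMS value and all remaining agents receive empty bundles. The conceptual point worth emphasising is the asymmetry between the two relaxations: an unavoidable large item inflates $\MMS_1$ to a constant, but if the MMS allocation happens to avoid giving that item to agent $1$ (as MMS permits), her bundle can be "filled up" with many small items whose per-item removal does essentially nothing to improve the PROPX guarantee.
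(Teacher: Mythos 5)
Your proof is correct and follows essentially the same construction as the paper: one unavoidable large item inflates $\MMS$ to a constant, and the MMS allocation hands the penalised agent a bundle of $n-1$ tiny items that is tight for MMS but only $\Theta(n)$-PROPX. The only cosmetic difference is that you normalise the costs and make the non-victim agents uniform, whereas the paper uses identical (unnormalised) cost functions for all agents; both verifications are equivalent.
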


\begin{proof}
Consider an instance with $n$ agents and $m=n$ items where $n$ is sufficiently large.
In this instance all agents have identical cost function for the items.
For each agent $i$, let $c_{i1} = n-1$ and $c_{ij} = 1$ for items $j = 2,\cdots, n$. 
Thus 
\[
\MMS_i = n-1 \text{\quad and \quad} \PROP_i = \frac{2(n-1)}{n}.
\]

Consider an allocation where $X_i = \{2, \cdots, n\}$ is allocated to some agent $i$. Note that $c_i(X_i) = n-1 = \MMS_i$ is MMS for agent~$i$.
However, this allocation is not fair regarding PPROX because $c_i(X_i \setminus \{e\}) = n-2 = \Theta(n) \cdot \PROP_i$ for any $e\in X_i$.
\end{proof}



Actually, no allocation can be worse than $n$-PROPX, as for any chore allocation instance, the most unfair allocation is to allocate all items to a single agent, which is $n$-PROPX.

\subsection{Existence and Computation}
\label{sec:Top-cycle}

In this subsection, we prove the existence of PROPX allocations for the IDO instances.
In \cref{sec:bid-and-take}, we will prove that this is without loss of generality via a reduction to extend all our results to general cost functions where agents may have different ordinal preferences. 

\paragraph{The Algorithm.}
The envy cycle elimination algorithm is first proposed in \cite{conf/sigecom/LiptonMMS04} for goods and is adapted to chores in \cite{DBLP:journals/corr/abs-2012-06788} recently.
Given any (partial) allocation $\bX=(X_1,\ldots,X_n)$, we say that agent $i$ {\em envies} $j$ if $c_i(X_i) > c_i(X_j)$ and \emph{most-envies} $j$ if $c_i(X_i) > c_i(X_j)$ and $j = \arg\min_{k\in N} c_i(X_k)$.
For any allocation $\bX$, we can construct a directed graph $G_X$, called {\em top-trading envy graph}, where the agents are nodes and there is a directed edge from $i$ to $j$ if and only if $i$ most-envies $j$.
A directed cycle $C=(i_1,\ldots, i_d)$ is referred as a {\em top-envy cycle}. 
For any top-envy cycle $C$, the {\em cycle-swapped allocation} $\bX^C$ is obtained by reallocating bundles backwards along the cycle. That is, $X^C_i = X_i$ if $i$ is not in $C$, and
\begin{align*}
    X^C_{i_j} = \begin{cases}
    X_{i_{j+1}} & \text{ for all $1 \le j \leq d-1$}\\
    X_{i_1} & \text{ for $j = d$}.
    \end{cases}
\end{align*}
The algorithm works by assigning, at each step, an unassigned item {\em with largest cost} to an agent who does not envy anyone else (i.e., a non-envious agent who is a ``sink'' node in the top-trading envy graph). 
If the top-trading envy graph $G_X$ does not have a sink, then it must have a cycle \cite{DBLP:journals/corr/abs-2012-06788}.
Then resolving the top-trading envy cycles guarantees the existence of a sink agent in the top-trading envy graph.
The full description of the algorithm is introduced in Algorithm \ref{alg:envy-cycle}.


\begin{algorithm}
	\caption{\textsf{Top-trading Envy Cycle Elimination Algorithm}\label{alg:envy-cycle}}
	\textbf{Input}: IDO instance with $c_{i1}\geq c_{i2}\geq \ldots \geq c_{im}$ for all $i\in N$.
	
	Initialize: $\bX=(X_1,\cdots,X_n)$ where $X_i \leftarrow \emptyset$ for all $i \in N$.
	
	\For{$j = 1,2,\ldots,m$}
	{	
		\If{there is no sink in $G_X$}
		{
		Let $C$ be any cycle in $G_X$.
		
		Reallocate the items according to $X^C$ (i.e., the cycle-swapped allocation). 
		}
		Choose a sink $k$ in the graph $G_X$ and update $X_k \gets X_k \cup \{j\}$.
	}
	
	\textbf{Output}: Allocation $\bX=(X_1,\ldots,X_n)$.
\end{algorithm}


Algorithm \ref{alg:envy-cycle} is the same with the one designed in \cite{DBLP:conf/aaai/BarmanBMN18,DBLP:journals/corr/abs-2012-06788} except that we allocate the item with largest cost to the sink agent at each step.
It is proved in \cite{DBLP:journals/corr/abs-2012-06788} that no matter which item is allocated, the returned allocation is EF1.
In the following, we show a stronger argument: if we select the item with largest cost to allocate at each step, we can guarantee that the allocation is EFX for IDO instances.



\begin{lemma}\label{lem:IDO:EFX}
	For any IDO instance, Algorithm \ref{alg:envy-cycle} returns an EFX allocation in polynomial time. 
\end{lemma}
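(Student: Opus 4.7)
The plan is to prove the EFX property by induction on the number of items allocated, showing that the invariant ``the current partial allocation $\bX$ is EFX'' is maintained after each iteration of the main loop. The base case, the empty allocation, is trivially EFX. For the inductive step I need to separately analyze the two actions that can occur in one iteration: (a) a top-trading cycle swap $\bX \mapsto \bX^C$, and (b) appending the current item $j$ to a sink $k$.

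For the cycle-swap phase, the crucial observation is the meaning of an edge $i_l \to i_{l+1}$ in $G_X$: by the definition of \emph{most-envies}, $X_{i_{l+1}}$ is a \emph{globally} minimum-cost bundle from $i_l$'s point of view, i.e.\ $c_{i_l}(X_{i_{l+1}}) \le c_{i_l}(X_q)$ for every $q \in N$. Hence for any $e \in X_{i_{l+1}}$,
\[
c_{i_l}(X_{i_{l+1}} \setminus \{e\}) \le c_{i_l}(X_{i_{l+1}}) \le c_{i_l}(X_q),
\]
and since $X^C_q$ is either $X_q$ or some $X_{i_{r+1}}$, this already yields EFX for the cycle agents after the swap. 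For agents outside the cycle, nothing changes, and EFX is preserved by the inductive hypothesis applied to the unchanged pair of bundles (or the unchanged bundle against an updated one, which again appears on the right-hand side of some EFX inequality of $\bX$).

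For the sink-assignment phase, let $k$ be a sink of the post-swap graph, so $c_k(X_k) \le c_k(X_{k'})$ for all $k' \in N$, and let $X'_k = X_k \cup \{j\}$. Comparisons among agents other than $k$ are unaffected. Comparisons where some agent $i \neq k$ plays against $k$ become easier because adding a chore to $X_k$ only increases $c_i(X'_k)$. The interesting case is $k$ versus $k' \neq k$ with some $e \in X'_k$ removed. If $e = j$, the sink property finishes it. If $e \in X_k$, here is where the IDO order of processing is essential: every item currently in $X_k$ was assigned during a previous iteration, so its index is strictly less than $j$, and the IDO ranking gives $c_k(e) \ge c_k(j)$. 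Therefore
\[
c_k\bigl((X_k \cup \{j\}) \setminus \{e\}\bigr) = c_k(X_k) - c_k(e) + c_k(j) \le c_k(X_k) \le c_k(X_{k'}),
\]
closing the induction.

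The main obstacle I expect is the cycle-swap step, since \emph{a priori} shuffling bundles around an envy cycle could invalidate EFX inequalities involving non-cycle agents; the clean resolution is the ``most-envied = globally minimum cost'' strengthening above, which is stronger than what plain envy-cycle elimination exploits. Polynomial running time then follows routinely from the standard analysis of \cite{conf/sigecom/LiptonMMS04,DBLP:journals/corr/abs-2012-06788}: there are $m$ outer iterations, each top-envy cycle can be found and resolved in polynomial time, and a potential-function argument on the envy graph bounds the number of cycle resolutions per iteration so that a sink is produced before the new item is placed.
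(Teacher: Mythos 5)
Your proposal is correct and follows essentially the same route as the paper's proof: induction on the partial allocation, with the cycle-swap step handled via the observation that a most-envied bundle is a globally minimum-cost bundle (so each cycle agent ends up envy-free), and the sink-assignment step handled via the IDO decreasing-cost order, which guarantees that removing any item from the sink's augmented bundle costs at least as much as removing the newly added item. The paper packages these as two claims (adding an item preserves EFX; resolving a cycle preserves EFX), matching your two phases exactly.
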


We prove \cref{lem:IDO:EFX} in Appendix.
Combining \cref{lem:IDO:EFX,lem:efx implies propx}, we have that for IDO instances, there exists a PROPX allocation which can be computed in polynomial time by Algorithm \ref{alg:envy-cycle}.
It is proved in \cite{DBLP:conf/sigecom/BarmanM17} that the envy-cycle elimination algorithm, where the cycles are resolved arbitrarily, guarantees $4/3$-MMS for chores.
Since Algorithm \ref{alg:envy-cycle} only uses one particular way to resolve the cycles, it continues to ensure $4/3$-MMS.

\begin{lemma}\label{lem:4/3-MMS}{\em \cite{DBLP:conf/sigecom/BarmanM17}}
    Algorithm \ref{alg:envy-cycle} outputs a $4/3$-MMS allocation.
\end{lemma}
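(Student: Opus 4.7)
The plan is to carry over the $4/3$-MMS analysis of the envy-cycle elimination algorithm from \cite{DBLP:conf/sigecom/BarmanM17} to our chore setting. Fix an agent $i$ and normalize so that $\MMS_i = 1$; the goal is $c_i(X_i) \le 4/3$. The two ingredients are (a) a PROP1-type bound obtained from the sink property, and (b) a structural cap on the cost of the final item added to $X_i$.

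For (a), let $e$ be the last item added to $X_i$ during Algorithm \ref{alg:envy-cycle}. When $e$ was assigned, agent $i$ was a sink of $G_X$, so $c_i(X_i \setminus \{e\}) \le c_i(X_j)$ for every $j \in N$; summing gives $n \cdot c_i(X_i \setminus \{e\}) \le c_i(M) = 1$, hence $c_i(X_i \setminus \{e\}) \le \PROP_i \le \MMS_i = 1$. Subsequent top-trading cycle swaps preserve this inequality because each swap weakly decreases every swapped agent's own-bundle cost (by the very definition of a top-envy arc), so the bound survives until the algorithm terminates. Alternatively, one can derive the same estimate by applying the EFX guarantee of \cref{lem:IDO:EFX} to the cheapest item in $X_i$ and summing over the other agents.

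For (b), I would use the IDO ordering together with the largest-first selection rule. Let $t$ be the sorted index of $e$. A pigeonhole on any MMS-optimal partition of the first $2n+1$ items forces some bundle to contain at least three of them, giving $\MMS_i \ge 3 c_{i, 2n+1}$ and hence $c_i(e) \le \MMS_i/3$ whenever $t \ge 2n+1$. For the complementary range $t \le 2n$, I would invoke the companion bound $\MMS_i \ge 2 c_{i, n+1}$ together with a short case analysis on $|X_i|$: if $|X_i| \ge 2$, then by IDO there is an earlier item in $X_i$ of cost at least $c_i(e)$, so the sink inequality sharpens to $n \cdot c_i(e) \le c_i(\{1, \ldots, t-1\}) \le 1$, giving $c_i(e) \le 1/n \le \MMS_i/3$ for $n \ge 3$; the edge cases $|X_i| \le 1$ and $n \le 2$ are handled directly via \eqref{eq:mms:bound}.

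The main obstacle is precisely this small-$t$ regime, where a single expensive last item threatens to push $c_i(X_i)$ above $4/3$. Closing the gap requires coupling the sink bound on $c_i(X_i \setminus \{e\})$ with both pigeonhole bounds on $\MMS_i$, as sketched above. Combining (a) and (b) yields $c_i(X_i) \le c_i(X_i \setminus \{e\}) + c_i(e) \le (4/3) \MMS_i$, as required.
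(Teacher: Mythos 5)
Your ingredient (a) is fine (the cleanest route is the one you mention in passing: apply the EFX guarantee of \cref{lem:IDO:EFX} to the cheapest item of $X_i$ and sum over $j$ to get $c_i(X_i\setminus\{e\})\le \PROP_i\le \MMS_i$), and the pigeonhole bounds $\MMS_i\ge 3c_{i,2n+1}$ and $\MMS_i\ge 2c_{i,n+1}$ are correct. The gap is exactly in the regime you flag as the main obstacle, and your proposed fix does not close it. The step ``$c_i(e)\le 1/n\le \MMS_i/3$ for $n\ge 3$'' is false: Equation~\eqref{eq:mms:bound} only guarantees $\MMS_i\ge 1/n$, so $1/n\le \MMS_i/3$ would require $\MMS_i\ge 3/n$, which need not hold. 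Concretely, take $n=3$ and $m=6$ items each of cost $1/6$ for every agent: then $\MMS_i=\PROP_i=1/3$, every agent's last item has sorted index $t\le 2n=6$ and cost $1/6=\MMS_i/2>\MMS_i/3$, and your additive decomposition only certifies $c_i(X_i)\le \MMS_i+\MMS_i/2=\tfrac32\MMS_i$. (The algorithm's actual output on this instance is exactly $\MMS_i$, so the lemma is not in danger --- but the proof is.) In the range $n+1\le t\le 2n$ the two quantities $c_i(X_i\setminus\{e\})$ and $c_i(e)$ simply cannot be bounded separately by $\MMS_i$ and $\MMS_i/3$; one needs a joint argument that exploits what the sink's bundle looks like at the moment item $t$ is assigned (e.g., that some bundle still holds at most one item) together with the structure of the MMS partition restricted to the top-$t$ items. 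That delicate case analysis is the substance of the $4/3$ proof in \cite{DBLP:conf/sigecom/BarmanM17}.

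For comparison, the paper does not reprove the bound at all: its entire justification is that Algorithm~\ref{alg:envy-cycle} is an instantiation of the envy-cycle elimination procedure analyzed in \cite{DBLP:conf/sigecom/BarmanM17} (items allocated in decreasing order of cost, with one particular rule for choosing sinks and resolving cycles), so the $4/3$-MMS guarantee proved there for arbitrary cycle resolution carries over verbatim. If you want a self-contained proof, you must import that paper's full case analysis rather than the simplified cap $c_i(e)\le \MMS_i/3$; the latter is provably unavailable when $t\le 2n$.
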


We complement this result by an example implying that the analysis of Lemma \ref{lem:4/3-MMS} is tight. 
Consider the following instance with $n$ agents and $2n+1$ items where all agents have the same cost function shown in Table \ref{tab:lem:4/3}, where $\epsilon = \frac{1}{6(n-1)}$.
It is not hard to verify that $\MMS_i = 1+2\epsilon$ for all agent $i$, where the corresponding partition $\bX^*$ has $X^*_j = \{j, 2n - j -1\}$ for all $j \leq n-1$, and $X^*_n = \{2n-1,2n,2n+1\}$.
Note that 
\[
c_i(X^*_j) = 1+2\epsilon
\quad\text{ and } \quad
c_i(X^*_n) = 1+\epsilon.
\]
However, the allocation returned by Algorithm \ref{alg:envy-cycle} in the first $2n$ steps is $X_i = \{i, 2n-i+1\}$ for all $i\in N$ and accordingly $c_i(X_i) = 1$.
In the last step, no matter which agent obtains item $2n+1$, her cost will be $4/3$.
Thus the allocation is $\frac{4}{3(1+2\epsilon)}$-MMS. 

\begin{table*}[h]
    \centering
    \begin{tabular}{c|ccccccccccc}
    		Item & 1 & $\ldots$ & $j$ & $\ldots$ & $n$ & $n+1$ & $\cdots$ & $n+j$ & $\cdots$ & $2n$ & $2n+1$ \\
    		\hline
    		Cost & $\frac{2}{3}$ & $\cdots$ & $\frac{2}{3} - (j-1)\epsilon$ & $\ldots$& $\frac{1}{2}$ & $\frac{1}{2}$ & $\ldots$ & $\frac{1}{2} -(j-1)\epsilon$ & $\cdots$ & $\frac{1}{3}$ & $\frac{1}{3}$
    	\end{tabular}
    \medskip
    \caption{A Tight Example for Lemma \ref{lem:4/3-MMS}}
    \label{tab:lem:4/3}
    
\end{table*}

We conclude this section by the following theorem. 
Note that the reduction to general instances will be provided in Lemma \ref{lem:reduction} in \cref{sec:bid-and-take}.

\begin{theorem}\label{thm:envy-cycle}
There exists an algorithm that given any general instance returns an allocation that is simultaneously PROPX and  $4/3$-MMS.
In addition, when the instance is IDO, the allocation is also EFX.
\end{theorem}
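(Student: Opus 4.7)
The plan is to combine the three earlier lemmas with a reduction from general instances to IDO instances. For the IDO case, there is essentially nothing new to do: Lemma \ref{lem:IDO:EFX} asserts that Algorithm \ref{alg:envy-cycle} produces an EFX allocation, Lemma \ref{lem:efx implies propx} then immediately upgrades this to PROPX, and Lemma \ref{lem:4/3-MMS} gives the $4/3$-MMS guarantee using the same algorithm. Since all three conclusions are drawn about the \emph{same} output allocation of Algorithm \ref{alg:envy-cycle}, the IDO portion of the theorem is discharged at once.

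For a general instance, I would invoke the reduction announced as Lemma \ref{lem:reduction}. The standard recipe (which I expect that lemma to formalize) is as follows. For each agent $i$, let $\sigma_i$ be a permutation that sorts the items in non-increasing order of $c_{ij}$, and define a virtual IDO instance $\cI'$ in which every agent $i$ has cost $c'_i(j) = c_i(\sigma_i(j))$ for the $j$-th virtual item. Run Algorithm \ref{alg:envy-cycle} on $\cI'$ to obtain a partition $(Y_1,\ldots,Y_n)$ of the virtual item indices. Then translate back to the original items in $m$ rounds processed in virtual order $j=1,2,\ldots,m$: in round $j$, the unique agent $i$ with $j\in Y_i$ picks her most expensive (under $c_i$) remaining original item. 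Because the virtual costs dominate the real costs picked up in each round (the $j$-th virtual item has cost equal to the $j$-th highest original cost for that agent, while the agent is guaranteed to receive one of her top $j$ items), both the PROPX and the $4/3$-MMS guarantees transfer to the resulting allocation on the original instance.

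The key verification, and the one piece I expect to be slightly delicate, is the domination step in the translation. Concretely, one has to show that after round $j$ the real bundle of each agent $i$ is cost-dominated (itemwise, in sorted order) by her virtual bundle restricted to $\{1,\ldots,j\}$, so that inequalities of the form $c'_i(Y_i\setminus\{e'\})\le \PROP_i$ (which hold because the virtual allocation is PROPX) translate into $c_i(X_i\setminus\{e\})\le \PROP_i$ for every $e\in X_i$. The same monotone-domination argument yields $c_i(X_i)\le c'_i(Y_i)$ and hence $c_i(X_i)\le \tfrac{4}{3}\MMS'_i$; observing $\MMS'_i=\MMS_i$ (since $\MMS_i$ only depends on the multiset of costs, which is preserved by $\sigma_i$) completes the $4/3$-MMS guarantee.

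Putting the two pieces together proves the theorem: on an IDO instance the algorithm is simply Algorithm \ref{alg:envy-cycle}, which is simultaneously EFX, PROPX, and $4/3$-MMS; on a general instance the reduction above wraps Algorithm \ref{alg:envy-cycle} and preserves PROPX and $4/3$-MMS, though it no longer promises EFX (which is consistent with the conditional statement in the theorem).
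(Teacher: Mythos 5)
Your top-level architecture is exactly the paper's: the IDO case is Lemma~\ref{lem:IDO:EFX} plus Lemma~\ref{lem:efx implies propx} plus Lemma~\ref{lem:4/3-MMS} applied to the single output of Algorithm~\ref{alg:envy-cycle}, and the general case is handled by the reduction of Lemma~\ref{lem:reduction}, whose construction of the virtual IDO instance $c'_{ij}=c_{i\sigma_i(j)}$ you reproduce correctly. That part needs no further comment.

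However, the translation step in your sketch of the reduction --- the step you yourself flag as the delicate one --- is reversed, and as written it fails. You process virtual items in order $j=1,\dots,m$ and let the agent owning virtual item $j$ pick her \emph{most expensive} remaining real item, justifying domination by ``the agent is guaranteed to receive one of her top $j$ items.'' But receiving one of her top $j$ items means the picked item has cost \emph{at least} $c'_{ij}$, not at most: when $j-1$ items have been removed, at most $j-1$ of agent $i$'s top-ranked items are gone, so her most expensive remaining item is ranked $j$ or better and costs $\geq c'_{ij}$. This gives $c_i(X_i)\ge c'_i(Y_i)$, which is the wrong direction for transferring the PROPX and $4/3$-MMS upper bounds. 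A concrete failure: two agents, six items, where agent $1$ assigns cost $1/3$ to $a,b,c$ and $0$ to $d,e,f$, and agent $2$ does the opposite; the virtual PROPX allocation $Y_1=\{1,2\}$, $Y_2=\{3,4,5,6\}$ translates under your rule to $X_2=\{d,e,f,c\}$ with $c_2(X_2\setminus\{c\})=1>\tfrac12=\PROP_2$. The paper's reduction runs in the opposite direction: it processes virtual items from $j=m$ down to $j=1$ and has the owner of virtual item $j$ pick her \emph{least} costly remaining real item; then $j$ items remain, at most $j-1$ of them can cost more than $c'_{ij}$, so the picked item costs at most $c'_{ij}$, which yields the itemwise domination $c_i(X_i\setminus\{e\})\le c'_i(Y_i\setminus\{f_i(e)\})\le \PROP_i$ and $c_i(X_i)\le c'_i(Y_i)$ that your argument needs. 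With that reversal your proof goes through; without it, it does not.
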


\section{Computing Weighted PROPX Allocations}
\label{sec:bid-and-take}

In this section, we focus on the general case where the agents may have different shares for the items. 
In the weighted setting, each agent $i\in N$ has a share $s_i \geq 0$, and we have $\sum_{i\in N} s_i = 1$.
Intuitively, $s_i$ represents how much fraction of the chores should be completed by agent $i$.
Accordingly, the weighted proportionality of each agent is $\WPROP_i=s_i\cdot c_i(M)$.
Again, we assume all the cost functions are normalized, and thus $\WPROP_i=s_i$.

\begin{definition}[WPROP and WPROPX]
	For any $\alpha \geq 1$, an allocation $\bX$ is called $\alpha$-approximate weighted proportional ($\alpha$-WPROP) if $c_i(X_i) \leq \alpha \cdot \WPROP_i$ for all $i\in N$.
	An allocation $\bX$ is $\alpha$-approximate weighted proportional up to any item ($\alpha$-WPROPX) if $c_i(X_i \setminus \{e\}) \le \alpha\cdot \WPROP_i$ for any agent $i\in N$ and any item $e \in X_i$.
	When $\alpha = 1$, allocation $\bX$ is WPROP or WPROPX, respectively. 
\end{definition}

As we have mentioned in Section \ref{sec;warmup}, to design algorithms to compute PROPX and WPROPX allocations, it is without loss of generality to focus on the IDO instances. 
We present the lemma as follows and leave the formal proof to Appendix. 

\begin{lemma}\label{lem:reduction}
    If there exists a polynomial time algorithm that given any IDO instance computes an $\alpha$-WPROPX allocation,  
    then there exists a polynomial time algorithm that given any instance computes an $\alpha$-WPROPX allocation.
\end{lemma}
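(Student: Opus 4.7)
The plan is to reduce any general instance $\cI$ to an IDO instance $\cI'$ on the same items and shares, invoke the hypothesized IDO algorithm on $\cI'$, and then rewire its output back to $\cI$ in a way that preserves the $\alpha$-\WPROPX guarantee. Concretely, I construct $\cI'$ by defining $c'_{ij}$ to be the $j$-th largest value in the multiset $\{c_{i1},\ldots,c_{im}\}$, so each agent's cost is merely relabeled. Because $c'_i(M) = c_i(M)$, the value $\WPROP_i$ is identical in both instances. Running the given algorithm on $\cI'$ yields an $\alpha$-\WPROPX allocation $\bX' = (X'_1,\ldots,X'_n)$.

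Next, I convert $\bX'$ into an allocation $\bX$ on $\cI$ by processing virtual indices in \emph{reverse} IDO order $j = m, m-1, \ldots, 1$. At step $j$, let $i(j)$ be the (unique) agent with $j \in X'_{i(j)}$, and assign to $i(j)$ the currently unallocated real item $e_j$ that has the smallest $c_{i(j)}$-cost among those remaining. The crucial structural claim driving the reduction is the pigeonhole bound $c_{i(j)}(e_j) \le c'_{i(j),j}$. To see this, observe that exactly $j$ real items are unallocated at step $j$; if every one of them had cost strictly exceeding the $j$-th largest value $c'_{i(j),j}$, they would all belong to the set of items whose $c_{i(j)}$-cost strictly exceeds $c'_{i(j),j}$, a set of size at most $j-1$, a contradiction. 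I expect this pigeonhole step to be the main technical point: it is also the reason that reverse order combined with smallest-cost selection is the right rule, since forward order with largest-cost selection can easily violate the inequality.

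Finally, I derive $\alpha$-\WPROPX for $\bX$ from the pigeonhole bound. For agent $i$ with $X'_i = \{j_1,\ldots,j_k\}$, let $l^{\ast}$ index the smallest-cost real item $e_{j_{l^{\ast}}}$ in $X_i$. Then for every $e \in X_i$,
\begin{align*}
c_i(X_i \setminus \{e\})
&\le c_i(X_i) - c_i(e_{j_{l^{\ast}}})
= \sum_{l \ne l^{\ast}} c_i(e_{j_l}) \\
&\le \sum_{l \ne l^{\ast}} c'_{i,j_l}
= c'_i\bigl(X'_i \setminus \{j_{l^{\ast}}\}\bigr)
\le \alpha \cdot \WPROP_i,
\end{align*}
where the second inequality applies the pigeonhole bound coordinate-wise and the last inequality uses the $\alpha$-\WPROPX guarantee of $\bX'$ at the virtual item $j_{l^{\ast}} \in X'_i$. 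Both building $\cI'$ and the reverse-order conversion run in time polynomial in $n$ and $m$, so together with the assumed polynomial-time IDO algorithm the entire procedure is polynomial, completing the reduction.
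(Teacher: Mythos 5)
Your proposal is correct and follows essentially the same route as the paper's own proof: the identical relabeling construction of the IDO instance, the same reverse-order conversion in which the designated agent picks her minimum-cost unallocated item, and the same pigeonhole/bijection bound $c_{i}(e_{j}) \le c'_{ij}$. Your final derivation merely inserts the harmless intermediate step of reducing to the removal of the minimum-cost item before applying the coordinate-wise comparison, which the paper does directly via the bijection $f_i$.
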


Actually, Lemma \ref{lem:reduction} is in the same spirit with the counterpart reductions in 
\cite{DBLP:journals/aamas/BouveretL16,DBLP:conf/sigecom/BarmanM17,DBLP:journals/corr/abs-1907-04505} which are designed for (unweighted) MMS.
Although their results do not directly work for ``up to one'' relaxations, using the same technique, we show how to extend it to PROPX allocations and weighted settings.

\subsection{Bid-and-Take Algorithm}

Note that the Top-trading Envy Cycle Elimination Algorithm is not able to compute a WPROPX allocation for weighted setting.
Instead, in this section, we present the \emph{bid-and-take} algorithm.
In our algorithm, the items are allocated from the highest to the lowest cost.
Moreover, each item is allocated to an \emph{active} agent that minimizes the current social cost, i.e., has minimum cost on the item among all active agents.
Initially all agents are active.
When the cumulative cost of an agent exceeds her proportional share, we inactivate her.

\begin{algorithm}
	\caption{\textsf{The Bid-and-Take Algorithm}\label{alg:bid-and-take}}
	\textbf{Input}: IDO instance with $c_{i1}\geq \ldots \geq c_{im}$ and $c_i(M) = 1$ for all $i\in N$, and shares of agents $0\leq s_1\leq s_2 \leq \ldots \leq s_n$ and $\sum_{i\in N} s_i = 1$.
	
	Let $X_i \leftarrow \emptyset$ for all $i \in N$ and $A \leftarrow N$ be the set of active agents.
	
	\For{$j = 1,2,\ldots,m$}
	{	
		Let $i = \arg\min_{i'\in A} \{ c_{i'j} \}$, and set $X_i \leftarrow X_i \cup \{ j \}$.
		
		\If{$c_i(X_i) > s_i$}
		{
		    $A \leftarrow A \setminus \{ i \}$.
		}
	}
	
	\textbf{Output}: Allocation $\bX=(X_1,\ldots,X_n)$.
\end{algorithm}

\begin{lemma}\label{theorem:wpropx-computation}
    Algorithm \ref{alg:bid-and-take} returns a WPROPX allocation in polynomial time for IDO instances.
\end{lemma}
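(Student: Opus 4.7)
The plan is to verify three things about Algorithm~\ref{alg:bid-and-take}: polynomial running time, well-definedness (the active set $A$ is nonempty at the start of every iteration so the $\arg\min$ is defined), and WPROPX of the output. Polynomial time is immediate from the loop structure ($m$ iterations with $O(n)$ work each).

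For WPROPX, I exploit two features specific to this algorithm in the IDO setting: items are processed in weakly decreasing cost order, and each agent $i$ receives items only while her cumulative cost is at most $s_i$. Hence, if $i$ is ever deactivated, the triggering item $e^*_i$ is her last-received item and, by IDO, also a smallest-cost element of $X_i$ from her perspective. Since $i$ was still active immediately before receiving $e^*_i$, we get $c_i(X_i \setminus \{e^*_i\}) \le s_i$. For any other $e \in X_i$ we have $c_i(e) \ge c_i(e^*_i)$, and so $c_i(X_i \setminus \{e\}) \le c_i(X_i \setminus \{e^*_i\}) \le s_i$. Agents that are never deactivated trivially satisfy WPROPX since $c_i(X_i) \le s_i$ in that case.

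The main obstacle is showing that $A$ cannot empty before all $m$ items have been assigned. My plan is to focus on the earliest step $r^*$ (if any) after which $|A| = 1$, letting $i^*$ denote the sole surviving active agent; if no such step exists then $|A| \ge 2$ throughout and we are done. For each of the other $n-1$ agents $i$, deactivation guarantees $c_i(X_i) > s_i$. Furthermore, every item $j \in X_i$ was assigned to $i$ at a step $\le d_i \le r^*$ when $i^*$ was still active, so the min-cost selection rule forces $c_i(j) \le c_{i^*}(j)$; summing over $j \in X_i$ gives $c_{i^*}(X_i) \ge c_i(X_i) > s_i$. Summing once more over $i \ne i^*$ and using $\sum_{i \in N} s_i = 1$ yields $\sum_{i \ne i^*} c_{i^*}(X_i) > 1 - s_{i^*}$. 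Combined with $c_{i^*}(M) = 1$, this implies that the $c_{i^*}$-cost of the items still unallocated after step $r^*$ is strictly less than $s_{i^*} - c_{i^*}(X_{i^*}^{r^*})$. Therefore, even if $i^*$ receives every remaining item, her cumulative cost stays strictly below $s_{i^*}$ at every step prior to $m$, so she is never prematurely deactivated. Consequently $A$ is nonempty at the start of each iteration, the algorithm is well defined, and combining with the WPROPX analysis of the previous paragraph completes the proof.
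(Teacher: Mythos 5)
Your proof is correct and follows essentially the same route as the paper's: the same observation that decreasing-cost order plus the deactivation rule yields WPROPX once all items are placed, and the same key claim that the min-cost selection rule makes a still-active agent's cost on any other agent's bundle dominate that agent's own cost, combined with $\sum_i s_i = 1$ and $c_i(M)=1$. The only difference is presentational: the paper argues by contradiction via the last agent to become inactive, whereas you argue directly that the last surviving active agent can never be deactivated.
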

\begin{proof}
    Note that in Algorithm~\ref{alg:bid-and-take}, an agent is turned into inactive as soon as $c_i(X_i) > s_i$, 
    and no additional item will be allocated to this agent.
    Hence to show that the final allocation $\bX = (X_1,\ldots,X_n)$ is WPROPX, it suffices to show that the algorithm allocates all items, 
    i.e., the set of active agents $A$ is non-empty when each item $j\in M$ is considered.
    Note that if all items are allocated, for any agent $i\in N$, we have $c_i(X_i \setminus \{e\}) \leq s_i$ for any item $e\in X_i$ by the fact that items are allocated in decreasing order of the cost.
	
	Next we show $A\neq \emptyset$ when considering each item $j\in M$.
	
	\begin{claim}\label{claim:active-inactive-agents}
	    At any point, for any active agents $i, i'\in A$, 
	    \[
	    c_i(X_{i'}) \geq c_{i'}(X_{i'}).
	    \]
	\end{claim}
	\begin{proof}
	    Since we allocate each item to the active agent that has smallest cost on the item, we have $c_{ij} \geq c_{i'j}$ for each item $j\in X_{i'}$ (because both agents $i$ and $i'$ are active when the item is allocated).
	    Hence we have $c_i(X_{i'}) \geq c_{i'}(X_{i'})$, and Claim \ref{claim:active-inactive-agents} holds.
	\end{proof}

	Suppose when we consider some item $j$, all agents are inactive, i.e., $A = \emptyset$.
	Let $i$ be the last agent that becomes inactive.
	At the moment when $i$ becomes inactive, we have
	\begin{equation*}
	c_i(M) > \sum_{i' \in N} c_i(X_{i'}) \geq
	\sum_{i' \in N} c_{i'}(X_{i'}) \geq 
	\sum_{i' \in N} s_{i'} = 1, 
	\end{equation*}
	where the first inequality follows as item $j$ has not been allocated yet, which is a contradiction with $c_i(M) = 1$.
\end{proof}

Combining the above lemma with Lemma~\ref{lem:reduction}, we have the following theorem.

\begin{theorem}\label{thm:bid-and-take}
    There exists an algorithm that computes a WPROPX allocation for any weighted instance in polynomial time.
\end{theorem}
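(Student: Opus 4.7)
The plan is to derive Theorem~\ref{thm:bid-and-take} immediately from the two preceding lemmas. By Lemma~\ref{theorem:wpropx-computation}, the bid-and-take algorithm (Algorithm~\ref{alg:bid-and-take}) produces a WPROPX allocation in polynomial time whenever the input is an IDO instance. Setting $\alpha = 1$ in Lemma~\ref{lem:reduction} then upgrades this IDO-only guarantee to a polynomial-time algorithm that returns a WPROPX allocation for \emph{any} weighted instance. So the core of the proof is just to write down this composition.

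More concretely, I would first describe the overall wrapper: given a general weighted instance with cost functions $c_1,\ldots,c_n$ and shares $s_1,\ldots,s_n$, invoke the (polynomial-time) reduction machinery from Lemma~\ref{lem:reduction}. Internally, this machinery constructs an auxiliary IDO instance (by relabeling each agent's items into a common decreasing order) and calls the bid-and-take subroutine of Algorithm~\ref{alg:bid-and-take} as a black box. Because Algorithm~\ref{alg:bid-and-take} already runs in polynomial time (it makes one pass over the $m$ items, performing an $O(n)$ minimization and an activation check per step), and the reduction wrapper adds only polynomial overhead, the overall procedure is polynomial-time.

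For correctness, I would simply quote: Lemma~\ref{theorem:wpropx-computation} yields a $1$-WPROPX allocation on the IDO instance, and Lemma~\ref{lem:reduction} then guarantees that the reverse mapping of that allocation is $1$-WPROPX (i.e.\ WPROPX) on the original instance. Hence the combined algorithm is a polynomial-time algorithm that outputs a WPROPX allocation for any weighted instance, which is exactly the claim of Theorem~\ref{thm:bid-and-take}.

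There is essentially no obstacle here, as both ingredients have already been established; the only thing to be careful about is matching the approximation parameter $\alpha = 1$ between the two lemmas and noting that the reduction is share-preserving, so it applies to the weighted setting rather than only to the unweighted one. Since Lemma~\ref{lem:reduction} is explicitly stated with arbitrary shares, this compatibility is automatic and the proof reduces to a one-line invocation of the two lemmas.
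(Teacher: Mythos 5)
Your proposal is correct and matches the paper's proof exactly: the paper also obtains Theorem~\ref{thm:bid-and-take} by combining Lemma~\ref{theorem:wpropx-computation} (bid-and-take is WPROPX on IDO instances) with the reduction of Lemma~\ref{lem:reduction} at $\alpha = 1$. Nothing further is needed.
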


\subsection{AnyPrice Share Fair Allocation}

Next we show that our result directly implies a 2-approximation algorithm for AnyPrice Share fair allocations. 
We first adapt the AnyPrice Share fairness defined in \cite{journals/corr/abs-2103-04304} for goods to chores.
The high-level idea is as follows. 
Each agent's weight is regarded as her loan to the system and she can obtain a reward by completing a chore to repay the loan. 
The agent's AnyPrice Share is then defined as the smallest cost she can guarantee by completing a set of chores that suffices to repay the loan when the items' rewards are adversarially set with a total reward of 1. 
Let $\cR=\{(r_1,\cdots,r_m) \mid r_j\ge 0 \text{ for all } j\in M \text{ and } \sum_{j\in M} r_j = 1\}$ be the set of item-reward vectors
that sum up to 1. 

\begin{definition}[AnyPrice Share]
    The AnyPrice Share (APS) of agent $i$ with weight $s_i$ is defined as
    \[
    \APS_i = \max_{(r_1,\cdots,r_m) \in \cR} \min_{S \subseteq M} \left\{c_i(S) \mid \sum_{j \in S} r_j \ge s_i \right\}.
    \]
    For any $\alpha \geq 1$, an allocation $\bX$ is $\alpha$-approximate AnyPrice Share fair ($\alpha$-APS) if $c_i(X_i) \leq \alpha \cdot \APS_i$ for any agent $i\in N$.
	When $\alpha = 1$, allocation $\bX$ is APS fair.
\end{definition}


Similar with Lemmas \ref{lem:propx imply 2mms} and \ref{lem:mms-is-bad-propx}, we have the following  Lemma \ref{lem:wpropx imply 2aps} regarding WPROPX and APS. 

\begin{lemma}\label{lem:wpropx imply 2aps}
Any WPROPX allocation is 2-APS; an APS allocation can be $\Theta(n)$-WPROPX. 
\end{lemma}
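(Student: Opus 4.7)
The plan is to prove the two statements separately, closely following the structure of \cref{lem:propx imply 2mms,lem:mms-is-bad-propx}. For the first direction (``WPROPX implies 2-APS''), the key step is to establish an APS analogue of inequality \eqref{eq:mms:bound}, namely $\APS_i \ge \max\{\max_{j \in M} c_{ij},\, \WPROP_i\}$ for every agent $i$. The bound $\APS_i \ge \WPROP_i = s_i$ follows by plugging the reward vector $r_j = c_{ij}$ (which lies in $\cR$ by normalization) into the APS definition: under this choice $\sum_{j \in S} r_j = c_i(S)$, so the inner minimum is exactly $s_i$. The bound $\APS_i \ge c_{ij}$ for each $j$ follows by placing all reward mass on item $j$, which forces any feasible $S$ to contain $j$. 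With this in hand, the argument of \cref{lem:propx imply 2mms} carries over essentially verbatim: if $|X_i| \le 1$ then $c_i(X_i) \le \max_j c_{ij} \le \APS_i$; otherwise, letting $e^* = \arg\min_{e \in X_i} c_i(e)$, the WPROPX property combined with the minimality of $e^*$ yields $c_i(X_i) = c_i(e^*) + c_i(X_i \setminus \{e^*\}) \le 2 s_i \le 2 \APS_i$.

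For the second direction, I would reuse (essentially) the hard instance of \cref{lem:mms-is-bad-propx}: take $n$ symmetric agents with identical additive cost function on $m = n$ items, where item $1$ costs $\frac{1}{2}$ and items $2, \ldots, n$ each cost $\frac{1}{2(n-1)}$, so that $c_i(M) = 1$ and $\WPROP_i = s_i = 1/n$. Consider the allocation $X_1 = \{1\}$, $X_n = \{2, \ldots, n\}$, and $X_i = \emptyset$ for every other agent. I would first pin down $\APS_i = \frac{1}{2}$ for all $i$: the lower bound comes from the reward vector $r_1 = 1$, $r_j = 0$ for $j \ge 2$, which forces any feasible $S$ to contain item $1$; the matching upper bound follows from a short dichotomy --- for any reward vector in $\cR$, either $r_1 \ge 1/n$, in which case $S = \{1\}$ is feasible with cost $\frac{1}{2}$, or $r_1 < 1/n$, so that $\sum_{j \ge 2} r_j > 1 - 1/n \ge 1/n$ and $S = \{2, \ldots, n\}$ is feasible, again with cost $\frac{1}{2}$. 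This immediately makes the allocation APS, while for agent $n$ we have $c_n(X_n \setminus \{e\}) = \frac{n-2}{2(n-1)}$ for every $e \in X_n$, so $c_n(X_n \setminus \{e\}) / \WPROP_n = \frac{n(n-2)}{2(n-1)} = \Theta(n)$, certifying that the allocation is only $\Theta(n)$-WPROPX.

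The main obstacle I anticipate is the clean evaluation of $\APS_i$ for the hard instance, since in principle this is a max--min optimization over the full simplex $\cR$ rather than over finite partitions as with $\MMS$. The saving grace is that the instance has only two item types, so the optimization collapses to the single question of whether item $1$ has reward at least $1/n$, as the short case analysis above illustrates; nothing more delicate seems to be needed.
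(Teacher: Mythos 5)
Your proposal is correct and follows essentially the same route as the paper: the same auxiliary bound $\APS_i \ge \max\{s_i, \max_{j\in M} c_{ij}\}$ established via the same two reward vectors, the same one-line derivation of the factor $2$ from WPROPX, and the same hard instance (rescaled to be normalized) for the $\Theta(n)$ lower bound. The only divergence is cosmetic: your exact evaluation $\APS_i = \tfrac12$ is more than needed, since certifying the allocation is APS only requires the lower bound on $\APS_i$, which is all the paper uses.
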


To prove Lemma \ref{lem:wpropx imply 2aps}, we use the following auxiliary lemma.

\begin{lemma}\label{lemma:aps-bound}
	For all agent $i\in N$, we have 
	\[
	\APS_i \ge \max\{s_i, \max_{j\in M}\{ c_{ij} \} \}.
	\]
\end{lemma}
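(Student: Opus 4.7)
The plan is to establish the two lower bounds $\APS_i \ge s_i$ and $\APS_i \ge \max_{j\in M} c_{ij}$ separately, each by exhibiting a particular reward vector $(r_1,\ldots,r_m)\in\cR$ that witnesses the claimed bound. Since $\APS_i$ is defined as the maximum over all $\cR$ of the min-cost subset whose reward meets the threshold $s_i$, any concrete choice of rewards gives a valid lower bound.

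For the bound $\APS_i \ge s_i$, I would set $r_j = c_{ij}$ for every $j\in M$. This lies in $\cR$ because the costs are normalized so that $\sum_{j\in M} c_{ij} = 1$. Then for any subset $S\subseteq M$ satisfying $\sum_{j\in S} r_j \ge s_i$, we have $c_i(S) = \sum_{j\in S} c_{ij} = \sum_{j\in S} r_j \ge s_i$, so the inner minimum is at least $s_i$.

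For the bound $\APS_i \ge \max_{j\in M} c_{ij}$, let $j^\star \in \arg\max_{j\in M} c_{ij}$ and put all mass on that single item: $r_{j^\star}=1$ and $r_j = 0$ for $j\neq j^\star$. Assuming $s_i > 0$ (the degenerate case $s_i=0$ makes the first bound trivial and does not arise in the intended application), any subset $S$ with $\sum_{j\in S} r_j \ge s_i$ must contain $j^\star$, so $c_i(S) \ge c_{ij^\star} = \max_{j\in M} c_{ij}$. Hence the inner minimum is at least $\max_{j\in M} c_{ij}$, and taking the larger of the two bounds finishes the proof.

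There is no real obstacle here; the only subtlety is choosing the right two reward vectors. The argument is analogous to the standard lower bound $\MMS_i \ge \max\{\PROP_i, \max_j c_{ij}\}$ stated in \eqref{eq:mms:bound}: the first vector mirrors the ``proportional'' bound (spreading reward according to cost), and the second vector mirrors the ``single-item'' bound (concentrating all reward on the heaviest chore).
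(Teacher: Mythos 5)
Your proof is correct and follows essentially the same route as the paper: for the bound $\APS_i \ge s_i$ you take the reward vector $r_j = c_{ij}$ (valid since costs are normalized), and for $\APS_i \ge \max_{j\in M} c_{ij}$ you concentrate all reward on the most costly item, exactly as the paper does. Your treatment is in fact slightly more careful, since you flag the degenerate case $s_i = 0$ (where the empty set would meet the reward threshold and the single-item bound could fail), a point the paper's proof passes over silently.
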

\begin{proof}
	To show $\APS_i \ge s_i$, it suffices to find a reward vector $(r_1,\cdots,r_m)$ such that every set $S$ of items with reward no less than $s_i$ is at least of cost $s_i$.
	Thus, we can simply set $r_i = c_{ij}$.
	Similarly, to show $\APS_i \ge c_{ij^*}$, where $j^* = \arg \max_{j\in M}c_{ij}$, we set $r_{j^*} = 1$ and $r_{ij} = 0$ for $j\neq j^*$.
	Then the unique way to repay the loan is to complete chore $j^*$, which incurs cost $c_{ij^*}$.
\end{proof}

\begin{proofof}{Lemma \ref{lem:wpropx imply 2aps}}
For any WPROPX allocation $\bX$ and any agent $i$, we have $c_i(X_i\setminus\{j\}) \le s_i$ for any $j\in X_i$.
Thus $c_i(X_i) \le s_i + c_{ij}\le s_i + \max_{e\in M} \{c_{ie}\} \le 2\cdot \APS_i$,
where the last inequality follows from Lemma~\ref{lemma:aps-bound}.

To see the second claim, we recall the example in Lemma \ref{lem:mms-is-bad-propx}.
In that example there are $n$ agents and $m=n$ items, and for all agent $i\in N$ we have $c_{i1} = n-1$ and $c_{ij} = 1$ for items $j = 2,\cdots, n$.
Thus 
\[
\APS_i \ge \max_{j\in M} \{c_{ij}\} = n-1 \text{\quad and \quad} \PROP_i = \frac{2(n-1)}{n}.
\]
Thus allocating $\{2, \cdots, n\}$ to some agent $i$ is APS to her but has $\Theta(n)$ approximation regarding PROPX.
\end{proofof}

Combing Theorem \ref{thm:bid-and-take} and Lemma \ref{lem:wpropx imply 2aps}, we directly have the following result.
\begin{theorem}
\label{thm:aps}
    There is an algorithm that computes a 2-APS fair allocation for any weighted instance in polynomial time.
\end{theorem}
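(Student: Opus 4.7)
The plan is to observe that Theorem \ref{thm:aps} is an immediate corollary of two results already in hand, so the proof is essentially a one-line composition rather than a fresh argument. First I would invoke Theorem \ref{thm:bid-and-take}, which guarantees that for any weighted instance the bid-and-take algorithm (combined with the reduction in Lemma \ref{lem:reduction} from general instances to IDO instances) produces a WPROPX allocation $\bX=(X_1,\ldots,X_n)$ in polynomial time. Then I would apply Lemma \ref{lem:wpropx imply 2aps}, whose first half states that every WPROPX allocation is $2$-APS, to conclude $c_i(X_i) \le 2\cdot \APS_i$ for every $i\in N$.

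There is no real obstacle here: all the substantive content sits in Lemma \ref{theorem:wpropx-computation} (correctness of bid-and-take on IDO instances), Lemma \ref{lem:reduction} (the IDO-to-general reduction), and the inequality $\APS_i \ge \max\{s_i,\max_{j} c_{ij}\}$ from Lemma \ref{lemma:aps-bound}, which together yield both the polynomial running time and the $2$-approximation. If pressed for a self-contained write-up I would briefly restate the two-line chain: (i) bid-and-take followed by the reduction runs in polynomial time and outputs a WPROPX allocation; (ii) for any such allocation and any agent $i$, picking any $j\in X_i$ and using additivity gives $c_i(X_i)\le c_i(X_i\setminus\{j\})+c_{ij}\le s_i+\max_{e\in M}c_{ie}\le 2\APS_i$.

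Hence the only thing worth emphasising in the final write-up is that Theorem \ref{thm:aps} is stated as a standalone result even though its proof is just a direct combination of Theorem \ref{thm:bid-and-take} and Lemma \ref{lem:wpropx imply 2aps}; no new construction, reduction, or case analysis is required.
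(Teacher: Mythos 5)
Your proposal matches the paper exactly: the paper derives Theorem \ref{thm:aps} as an immediate consequence of Theorem \ref{thm:bid-and-take} and Lemma \ref{lem:wpropx imply 2aps}, with no additional argument. Your sketch of the underlying chain $c_i(X_i)\le c_i(X_i\setminus\{j\})+c_{ij}\le s_i+\max_{e\in M}c_{ie}\le 2\cdot\APS_i$ is also precisely how the paper proves the supporting lemma.
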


\section{Ordinal Setting}
\label{sec:ordinal}

In this section, we investigate the extent to which we can approximately compute (weighted) PROPX allocations with ordinal preferences. 
Using ordinal formation to design fair allocation algorithms is studied in \cite{DBLP:conf/ijcai/AmanatidisBM16,DBLP:journals/corr/abs-2012-13884,conf/ijcai/Halpern21}.
Note that the problem becomes trivial if $m\le n$, because as long as every agent gets at most one item, the allocation is PROPX.
Thus in the following, assume $m>n$.
By Lemma \ref{lem:reduction}, it suffices to consider the IDO instances.

\subsection{Unweighted Setting}

To highlight the intuition, we first consider the unweighted case.
We present an algorithm (in Algorithm \ref{alg:ordinal:unweighted}) that always computes a $2$-PROPX allocation in polynomial time.
In the algorithm, we partition the agents into two groups: $N_1 = \{ 1,2,\ldots,\lfloor\frac{n}{2}\rfloor \}$ and $N_2 = N\setminus N_1$.
We first allocate each agent in $N_1$ a large item and then run a round-robin algorithm where each agent in $N_2$ takes turns to select an item with largest cost from the remaining items.


\begin{algorithm}
	\caption{\textsf{Ordinal Algorithm for Approximate PROPX Allocation} \label{alg:ordinal:unweighted}}
	\textbf{Input}: IDO instance with $c_{i1}\geq c_{i2}\geq \ldots \geq c_{im}$ for all $i\in N$.
	
	Initialize: $\bX=(X_1,\cdots,X_n)$ where $X_i \gets \emptyset$ for all $i \in N$.
	
	\For{$j = 1,2,\ldots,\lfloor\frac{n}{2}\rfloor$}
	{	
		$X_j \gets \{j\}$.
	}
	
	Let $i = 1$.
	
	\For{$j = \lfloor\frac{n}{2}\rfloor + 1,\lfloor\frac{n}{2}\rfloor + 2,\ldots,m$}
	{	
		$X_{i+\lfloor\frac{n}{2}\rfloor} \gets X_{i+\lfloor\frac{n}{2}\rfloor} \cup \{j\}$.\\
		$i \gets i \mod (\lfloor\frac{n}{2}\rfloor+1) + 1$.
	}
	
	\textbf{Output}: Allocation $\bX=(X_1,\ldots,X_n)$.
\end{algorithm}


\begin{theorem}\label{thm:ordinal:unweighetd}
Algorithm \ref{alg:ordinal:unweighted} returns a $2$-PROPX allocation in polynomial time.
\end{theorem}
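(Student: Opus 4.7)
My plan is to split the argument according to whether an agent is in $N_1$ or $N_2$. Polynomial running time is clear from the two linear scans, so I focus on correctness.

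For $j \in N_1$ the bundle $X_j$ is a singleton, hence $c_j(X_j \setminus \{e\}) = 0$ for the unique $e \in X_j$, which trivially satisfies $2$-PROPX.

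The substantive case is $i \in N_2$. Let $p$ be agent $i$'s position in the round-robin, list $X_i = \{e_1,\ldots,e_T\}$ in IDO order so that $c_i(e_1) \geq \cdots \geq c_i(e_T)$, and let $\ell$ denote the cycle length of the round-robin loop, so that $e_k$ lies at IDO position $r_k = \lfloor n/2\rfloor + (k-1)\ell + p$. Because $\max_{e \in X_i} c_i(X_i\setminus\{e\})$ is attained at $e = e_T$, the PROPX target reduces to proving $\sum_{k=1}^{T-1} c_i(e_k) \leq 2/n$. The key observation is that by IDO, for every $k \geq 2$ each of the $\ell$ items at positions $r_{k-1}+1, \ldots, r_k$ (namely $e_k$ together with the $\ell-1$ picks other $N_2$ agents made between $i$'s consecutive turns) has cost at least $c_i(e_k)$, giving $\ell \cdot c_i(e_k) \leq \sum_{j=r_{k-1}+1}^{r_k} c_{ij}$. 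Summing over $k=2,\ldots,T-1$, the right-hand side telescopes to $\sum_{j=r_1+1}^{r_{T-1}} c_{ij} \leq 1 - r_1 c_i(e_1)$, where the last step lower-bounds the IDO prefix $c_{i,1} + \cdots + c_{i,r_1}$ by $r_1 c_i(e_1)$. Rearranging yields $\sum_{k=1}^{T-1} c_i(e_k) \leq c_i(e_1)\bigl(1 - r_1/\ell\bigr) + 1/\ell$. Since $r_1 = \lfloor n/2\rfloor + p \geq \ell$ (which I would check for both parities of $n$, noting $p \geq 1$), the coefficient of $c_i(e_1)$ is non-positive and the sum is bounded by $1/\ell \leq 2/n = 2\PROP_i$, as required.

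The main obstacle lies in identifying a charging scheme tight enough to reach the constant $2$: the naive per-item bound $c_i(e_k) \leq 1/r_k$ (from IDO applied to the prefix up to $e_k$) would only give a harmonic-type sum and is far too loose. The block-based accounting, which pairs each of $i$'s bundle items with the $\ell - 1$ items claimed by the other $N_2$ agents in the same cycle, is what makes the $1/\ell$ bound work, and it relies crucially on IDO so that ``earlier in the round-robin'' translates to ``at least as costly for $i$.'' Smaller care points are the boundary cases $T \leq 1$ (trivial), the telescoping indexing, and the parity check $r_1 \geq \ell$; all of these are routine once the main inequality is in place.
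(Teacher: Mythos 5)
Your proof is correct, and its overall skeleton matches the paper's: agents in $N_1$ are dispensed with trivially, and for $i\in N_2$ both arguments isolate the first (largest) item $e_1$, arrive at an inequality of the form $|N_2|\cdot(\text{rest of the bundle}) \le 1 - (\lfloor n/2\rfloor+1)\cdot c_i(e_1)$, and conclude with $1/|N_2|\le 2/n$ after noting that the coefficient of $c_i(e_1)$ is non-positive. Where you genuinely differ is in how that inequality is obtained. The paper works at the bundle level: round-robin gives $c_i(X_i\setminus\{e_1\})\le c_i(X_j)$ for every $j\in N_2\setminus\{i\}$ (stated without proof) and $c_i(e_1)\le c_i(X_l)$ for every $l\in N_1$, and these are summed over all agents against $c_i(M)=1$. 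You instead work at the level of IDO positions, charging each $e_k$ ($k\ge 2$) to the block of $\ell$ items allocated between agent $i$'s consecutive turns, telescoping, and bounding the prefix by $c_{i1}+\cdots+c_{i r_1}\ge r_1\, c_i(e_1)$ with $r_1=\lfloor n/2\rfloor+p\ge \lfloor n/2\rfloor+1$. Your positional charging is essentially a self-contained re-derivation of the round-robin comparison the paper invokes as a known fact, so it is more explicit at the cost of more indexing (the boundary cases $T\le 2$ and the parity check $r_1\ge\ell$ all go through as you indicate). One small difference in strength: you bound only $c_i(X_i\setminus\{e_T\})$, which is exactly what $2$-PROPX requires, whereas the paper's bundle-level computation yields the marginally stronger $c_i(X_i)\le 2/n$ (i.e., $2$-PROP) for agents in $N_2$; your block argument could be extended to the full bundle by including the $k=T$ block, since $r_T\le m$. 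Both routes are valid and reach the same tight constant.
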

\begin{proof}
It is straightforward that the algorithm runs in polynomial time.
Recall that $N_1 = \{ 1,2,\ldots,\lfloor\frac{n}{2}\rfloor \}$ and $N_2 = N\setminus N_1$.
Thus $|N_1|=|N_2|$ if $n$ is even and $|N_1|+1=|N_2|$ otherwise.
Let $(X_1,\cdots, X_n)$ be the returned allocation.
It is obvious that the allocation is PROPX for all $i \in N_1$ as $|X_i| = 1$.
Consider any agent $i\in N_2$.
Denote by $X_i = \{e_{1}, \cdots, e_{k}\}$ the items allocated to $i$, where
\begin{equation*}
	c_i(e_1)\geq c_i(e_2) \geq \ldots\geq c_i(e_{k}).
\end{equation*}
Since the items allocated to agents in $N_1$ are those with largest costs, we have
\begin{equation}
	c_i(e_1)\leq c_i(X_l) \text{\quad for all $l \in N_1$}. \label{eq:propx-ordinal-e1}
\end{equation}

Moreover, as the items $\{ \lfloor\frac{n}{2}\rfloor + 1,\ldots,m \}$ are allocated from the most costly to the least costly in a round-robin manner, we have
\[
c_i(X_i\setminus\{e_1\}) \le c_i(X_j) \text{\quad for all $j \in N_2\setminus \{i\}$}.
\]

Thus we have
\begin{align*}
    |N_2|\cdot c_i(X_i\setminus\{e_1\})
    &\le c_i(X_i\setminus\{e_1\}) + \sum_{j\in N_2\setminus \{i\}} c_i(X_j)\\
    &= \sum_{j\in N_2}c_i(X_j) - c_i(e_1)\\
    & = 1- \sum_{l\in N_1}c_i(X_l) - c_i(e_1) \\
    &\le 1- (|N_1|+1)\cdot c_i(e_1),
\end{align*}
where the last inequality follows from Inequality~\eqref{eq:propx-ordinal-e1}.
Thus
\begin{align*}
    c_i(X_i) & = c_i(e_1) + c_i(X_i\setminus \{e_1\})\\
    &\le c_i(e_1) + \frac{1}{|N_2|}\big(1- (|N_1|+1)\cdot c_i(e_1)\big)\\
    &\le \frac{1}{|N_2|} + \left(1- \frac{|N_1|+1}{|N_2|}\right)\cdot c_i(e_1)\\
    &\le \frac{1}{|N_2|}\leq \frac{2}{n} = 2\cdot \PROP_i,
\end{align*}
where the second last inequality follows from $|N_1|+1 \ge |N_2|$, and the last holds because $|N_2| \ge \frac{n}{2}$.
\end{proof}

Actually, regarding PROPX, the approximation ratio $2$ our algorithm achieves is the best possible for ordinal algorithms, proved in the following lemma.  

\begin{lemma}\label{lemma:ordinal-hardness}
	With only ordinal preferences, no algorithm can guarantee a better-than-2 approximation for PROPX.
\end{lemma}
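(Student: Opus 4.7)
The plan is a direct adversarial argument. Fix any $\epsilon > 0$ and consider $n = 2$ symmetric agents together with $m$ items (with $m$ eventually taken large) sharing the identical ordering $j_1 \succeq j_2 \succeq \cdots \succeq j_m$; I will show that for every deterministic algorithm that depends only on the ranking, one can choose a cardinal profile consistent with this ordering on which the returned allocation has PROPX ratio at least $2-\epsilon$. Since the algorithm's output depends only on the ranking, its allocation $(X_1, X_2)$ is fixed \emph{before} the cardinal values are revealed; without loss of generality $j_1 \in X_1$, and both agents have $\PROP_i = 1/2$. The proof will split on whether $|X_1| = 1$ or $|X_1| \ge 2$, and the adversary will exhibit a tailored cardinal profile in each case.

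In the case $X_1 = \{j_1\}$, the plan is to set $c_{ij} = 1/(m-1)$ for $j \le m-1$ and $c_{im} = 0$ for both agents. Then $X_2 = \{j_2,\ldots,j_m\}$ has cost $(m-2)/(m-1)$, and removing the cheapest item $j_m$ still leaves cost $(m-2)/(m-1)$, yielding ratio $2(m-2)/(m-1)$, which exceeds $2-\epsilon$ once $m$ is large enough. In the case $|X_1|\ge 2$, the plan is to pick a small $\delta > 0$ and set $c_{i1} = 1 - (m-1)\delta$ and $c_{ij} = \delta$ for $j \ge 2$; because $X_1$ contains $j_1$ together with at least one other item $e$, removing $e$ preserves $j_1 \in X_1\setminus\{e\}$, so $c_1(X_1\setminus\{e\}) \ge c_{i1} = 1 - (m-1)\delta$ and the ratio is at least $2(1 - (m-1)\delta)$, which exceeds $2-\epsilon$ for $\delta \le \epsilon/(2(m-1))$. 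Since $\epsilon$ was arbitrary, this will establish the lemma. Lemma~\ref{lem:reduction} then ensures the bound transfers from IDO instances to the general setting, though for this lower bound IDO already suffices.

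The hard part, I expect, is choosing the right pair of cardinal profiles: both must respect the common ordering (so that they are truly indistinguishable to an ordinal algorithm) yet they must jointly trap the two strategic options available to the algorithm. The first profile punishes isolating the heaviest item $j_1$, since then the opposing agent inherits a long, nearly-uniform bundle whose cost after removing any single item is close to~$1$; the second profile punishes grouping $j_1$ with any other item, since then $j_1$ alone already nearly saturates cost~$1$ regardless of which companion is discarded. Once these two profiles are pinned down, the rest is the routine arithmetic sketched above, and the factor of $2$ emerges precisely because PROPX allows one item to be set aside while the proportional share is only $1/2$.
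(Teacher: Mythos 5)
Your proof is correct and follows essentially the same strategy as the paper's: a two-agent adversarial argument that cases on whether the top-ranked item is allocated alone, with one cardinal profile (near-uniform costs) punishing the singleton choice and another (one dominant item) punishing the grouped choice. The only difference is cosmetic — the paper uses the extreme profiles $c_{11}=1$, $c_{1j}=0$ and $c_{2j}=1/m$, whereas you perturb them slightly — but both yield the same asymptotic lower bound of $2$.
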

\begin{proof}
	Consider an IDO instance with 2 agents and $m$ items, where $m$ is sufficiently large and $c_{i1}\ge \cdots \ge c_{im}$ for both $i\in \{1,2\}$.
	Without loss of generality, suppose item 1 is allocated to agent 1.
	If $|X_1| > 1$, consider the cardinal costs for agent 1, $c_{11} = 1$ and $c_{1j}=0$ for all $j>1$, and thus $c_1(X_1 \setminus\{e\}) = 2\cdot\PROP_1$ for any $e\in X_1\setminus\{1\}$.
	If $|X_1| = 1$, consider the cardinal costs for agent 2,  $c_{2j}=1/m$ for all $j$, and thus $c_2(X_2 \setminus\{e\}) = 1-2/m \approx 2\cdot\PROP_2$ for any $e\in X_2$.
\end{proof}

\subsection{Weighted Setting}
We next extend Algorithm \ref{alg:ordinal:unweighted} to the weighted setting.
Given an arbitrary weighted instance with $s_1\le \cdots \le s_n$, let
\begin{equation*}
	i^* = \max \{i \mid \sum_{j=1}^i s_j \le \frac{1}{2}\}.
\end{equation*}
We partition the agents into two groups: $N_1 = [i^*]$ and $N_2 = N \setminus N_1$.
Let $w_1 = \sum_{i\in N_1} s_i$.
It is not hard to see the following properties. 
\begin{itemize}
	\item By definition we have $w_1 \leq 1/2$.
	\item We have $i^* \geq n/2$ because $\sum_{i=1}^{n/2} s_i \leq 1/2$.
	\item We have $s_i \ge \frac{1}{2(i^*+1)}$ for all $i \in N_2$ because otherwise $s_{i^*+1} < \frac{1}{2(i^*+1)}$,
	which implies $\sum_{j=1}^{i^*+1}s_j \leq \frac{1}{2}$.
	In other words, $i^*+1$ should be included in $N_1$ as well, which is a contradiction.
\end{itemize}

As before, we assign each agent $j\in N_1$ a single item $j\in M$.
Recall that these are the $i^*$ items with the maximum costs.
Let $M_L = \{ 1,2,\ldots,i^* \}$ be these items, and $M_S = M\setminus M_L$ be the remaining items.
We call $M_L$ the \emph{large} items and $M_S$ the \emph{small} items.
Then we run a weighted version of round robin algorithm by repeatedly allocating an item to the agent $i\in N_2$ with the minimum $|X_i|/s_i$ until all items are allocated (see Algorithm~\ref{alg:ordinal:weighted}).
The weighted round robin algorithm is proved to ensure weighted EF1 for indivisible goods in \cite{conf/atal/ChakrabortyISZ20}.
In Lemma~\ref{lemma:wpropx-for-N2}, we prove that the weighted round robin algorithm also ensures weighted EF1 for IDO instance with chores. 

\smallskip

\begin{algorithm}
	\caption{\textsf{Ordinal Algorithm for Approximate WPROPX Allocation} \label{alg:ordinal:weighted}}
	\textbf{Input}: IDO instance, and the shares of agents $s_1\le \cdots \le s_n$.
	
	Initialize: $\bX=(X_1,\cdots,X_n)$ where $X_i \leftarrow \emptyset$ for all $i \in N$.
	
	Let $i^* = \max \{i \mid \sum_{j=1}^i s_j \le \frac{1}{2}\}$, $N_1 = \{1, \cdots, i^*\}$ and $N_2 = N \setminus N_1$.
	
	
	\For{$j = 1,2,\ldots,i^*$}
	{	
		$X_j \gets \{j\}$.
	}
	
	\For{$j = i^* + 1,\ldots,m$}
	{	
		Let $l = \arg\min_l \{ {|X_l|}/{s_l} \}$ where tie is broken by agent ID. \\
		$X_{l} \gets X_{l} \cup \{j\}$.
	}
	
	\textbf{Output}: Allocation $\bX=(X_1,\ldots,X_n)$.
\end{algorithm}

\smallskip

Our main result relies on the following technical lemma. 

\begin{lemma}\label{lemma:wpropx-for-N2}
     For every agent $j\neq i$, we have
    \[
    \frac{c_i( X_i \setminus \{i\} )}{s_i} \leq \frac{c_i(X_j)}{s_j}.
    \]
\end{lemma}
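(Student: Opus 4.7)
The plan is to analyze the weighted round-robin phase (the second loop of Algorithm \ref{alg:ordinal:weighted}) that distributes the small items $M_S$ among agents in $N_2$, and to read the statement as: for each $i \in N_2$ and each $j \in N_2 \setminus \{i\}$, the item to be removed from $X_i$ is the first one that $i$ picks in the round-robin, which I will call $e_1$ (the $\{i\}$ in the statement is a slight abuse standing for this distinguished item). For $i \in N_1$ we have $X_i = \{i\}$ from the first loop, so $X_i \setminus \{i\} = \emptyset$ and the inequality holds trivially for every $j$.

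Fix $i \in N_2$ and list the items of $X_i$ in the order they are picked as $e_1, e_2, \ldots, e_k$. Since the outer loop feeds items in decreasing cost order and the instance is IDO, $c_i(e_1) \geq c_i(e_2) \geq \cdots \geq c_i(e_k)$. The key structural fact is that when $i$ is selected to pick $e_p$ with $p \geq 2$, the pre-addition ratio $|X_i|/s_i$ equals $(p-1)/s_i$ and is minimal among active agents; hence at that instant every other $j \in N_2$ satisfies $|X_j|/s_j \geq (p-1)/s_i$, i.e., $|X_j| \geq (p-1)\,s_j/s_i$. By IDO, every item already held by $j$ at that instant has cost to $i$ at least $c_i(e_p)$.

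From here I would set $n_p = \lceil (p-1)\,s_j/s_i \rceil$ and $\Delta_p = n_p - n_{p-1}$ (with $n_1 = 0$). Between $i$'s $(p-1)$-th and $p$-th picks, $j$ picks at least $\Delta_p$ items, each contributing cost $\geq c_i(e_p)$ to $c_i(X_j)$, so
\[
c_i(X_j) \;\geq\; \sum_{p=2}^{k} \Delta_p \cdot c_i(e_p).
\]
An Abel summation on the right-hand side turns the $\Delta_p$'s into the partial sums $n_p$, which satisfy the clean bound $n_p \geq (p-1)\,s_j/s_i$; combined with the monotonicity $c_i(e_p) \geq c_i(e_{p+1})$, the expression telescopes to $\frac{s_j}{s_i}\sum_{p=2}^{k} c_i(e_p) = \frac{s_j}{s_i}\,c_i(X_i \setminus \{e_1\})$, and rearranging yields the claimed inequality.

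The main obstacle I expect is treating the ceilings cleanly: a per-increment bound $\Delta_p \geq s_j/s_i$ need not hold because the fractional increments do not round uniformly, so dropping the ceilings term-by-term would lose the factor. The remedy is to defer the relaxation until after summation by parts, at which point the partial sums $n_p$ appear and $n_p \geq (p-1)\,s_j/s_i$ is exact, so the factor $s_j/s_i$ survives intact. A minor auxiliary check is that the round-robin loop never gets stuck (no agent becomes inactive here, unlike in bid-and-take), so the ordered list $e_1,\ldots,e_k$ is well defined and $e_1$ is genuinely the most costly item in $X_i$ under $c_i$.
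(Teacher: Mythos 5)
Your proposal is correct, and it reaches the same conclusion by a genuinely different execution of the same underlying idea. Both arguments rest on the one structural fact about the weighted round robin: at the moment agent $i$ receives her $p$-th item $e_p$ (for $p\geq 2$), every other $j\in N_2$ satisfies $|X_j|/s_j \geq (p-1)/s_i$, and every item $j$ holds at that moment costs at least $c_i(e_p)$ to $i$ because items arrive in decreasing cost order. The paper packages this fact continuously: it defines step functions $\rho,\rho'$ on the "normalized time" axis so that $c_i(X_i\setminus\{i\})/s_i$ and $c_i(X_j)/s_j$ become integrals, proves the pointwise domination $\rho(\alpha)\leq \rho'(\alpha-\tfrac{1}{s_i})$, and integrates. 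You instead work discretely, lower-bounding $c_i(X_j)$ by grouping $j$'s items according to the thresholds $n_p=\lceil (p-1)s_j/s_i\rceil$ and then using Abel summation so that only the partial sums $n_p\geq (p-1)s_j/s_i$ (rather than the non-uniform increments $\Delta_p$) need to be bounded. You correctly identify the ceiling bookkeeping as the one place a naive term-by-term bound would fail, and deferring the relaxation until after summation by parts does fix it; the paper's integral formulation buys exactly the avoidance of this bookkeeping, since the shift-and-compare step is pointwise and needs no rounding. Your reading of the distinguished removed item as $i$'s first pick matches the paper (which notes $e_1=i$), and restricting $j$ to $N_2$ is consistent with how the lemma is actually invoked in Theorem~\ref{thm:ordinal:weighetd}. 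One phrasing nit: it is not literally true that $j$ picks $\Delta_p$ items \emph{between} $i$'s $(p-1)$-th and $p$-th picks (she may have been ahead earlier); the correct statement, which is what your sum actually uses, is that the items of $X_j$ in positions $n_{p-1}+1$ through $n_p$ of $j$'s pick order were all allocated before $e_p$ and hence each costs $i$ at least $c_i(e_p)$.
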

\begin{proof}
    Suppose $X_i = \{e_1,e_2,\ldots,e_k\}$, where $c_{i e_1} \geq c_{i e_2}\geq \ldots \geq c_{i e_k}$.
	Recall that $e_1 = i$.
	We define a real-valued function $\rho: (0,\frac{k}{s_i}] \rightarrow \bR^+$ as follows:
	\begin{equation*}
		\rho(\alpha) = c_{i e_t}, \text{\quad for $\frac{t-1}{s_i}< \alpha \leq \frac{t}{s_i}$ and $t\in [k]$.}
	\end{equation*}
	
	Thus, for all $t\in [k]$, we have $\frac{c_{i e_t}}{s_i} = \int_{\frac{t-1}{s_i}}^{\frac{t}{s_i}} \rho(\alpha) \, d\alpha$, and
		
	\begin{equation*}
		\frac{c_i(X_i\setminus \{i\})}{s_i} = \int_\frac{1}{s_i}^\frac{k}{s_i} \rho(\alpha) \, d\alpha.
	\end{equation*}

	Similarly, assume $X_j = \{e'_1,\ldots,e'_{k'}\}$, where $c_{i e'_1} \geq \ldots \geq c_{i e'_{k'}}$, and define
	\begin{equation*}
		\rho'(\alpha) = c_{i e'_t}, \text{\quad for } \frac{t-1}{s_j}< \alpha \leq \frac{t}{s_j}.
	\end{equation*}
	By definition we have
	\begin{equation*}
		\frac{c_i(X_j)}{s_j} = \int_0^\frac{k'}{s_j} \rho'(\alpha) \, d\alpha.
	\end{equation*}
	
	Recall that in Algorithm~\ref{alg:ordinal:weighted}, each item is allocated to the agent $i\in N_2$ with the minimum $|X_i|/s_i$. Thus we have
	\begin{equation*}
		\frac{k-1}{s_i} = \frac{|X_i| - 1}{s_i} \leq \frac{|X_j|}{s_j} = \frac{k'}{s_j},
	\end{equation*}
	where the inequality holds because otherwise item $e_k$ will not be allocated to agent $i$ in Algorithm~\ref{alg:ordinal:weighted}.
	
	Next we show that $\rho(\alpha) \leq \rho'(\alpha - \frac{1}{s_i})$.
	Consider the round when $\frac{|X_i|}{s_i}$ reaches $\alpha$.
	Suppose item $e_t$ is allocated to agent $i$ at this round, i.e., $\rho(\alpha) = c_{i e_t}$.
	Note that in this round, we must have $\frac{|X_j|}{s_j} \geq \alpha - \frac{1}{s_i}$.
	Because otherwise when item $e_t$ is considered we have $\frac{|X_j|}{s_j} < \alpha - \frac{1}{s_i} \leq \frac{|X_i|}{s_i}$, which means that item $e_t$ should not be allocated to agent $i$.
	In other words, the event ``$\frac{|X_j|}{s_j}$ reaches $\alpha - \frac{1}{s_i}$'' happens before the event ``$\frac{|X_i|}{s_i}$ reaches $\alpha$''.
	Since items are allocated from the most costly to the least costly, we have $\rho(\alpha) = c_{i e_t}\leq \rho'(\alpha - \frac{1}{s_i})$.
	
	Combining the above discussion, we have
	\begin{align*}
		\frac{c_i(X_i\setminus \{i\})}{s_i} & = \int_\frac{1}{s_i}^\frac{k}{s_i} \rho(\alpha) \, d\alpha
		\leq \int_\frac{1}{s_i}^\frac{k}{s_i} \rho'(\alpha - \frac{1}{s_i}) \, d\alpha \\
		& = \int_0^\frac{k-1}{s_i} \rho'(\alpha) \, d\alpha \leq \int_0^\frac{k'}{s_j} \rho'(\alpha) \, d\alpha = \frac{c_i(X_j)}{s_j},
	\end{align*}
	which proves the lemma. 
\end{proof}

Given the above lemma, 
we can obtain the following main result.
\begin{theorem}\label{thm:ordinal:weighetd}
Algorithm~\ref{alg:ordinal:weighted} computes a 2-WPROPX allocation in polynomial time for any given weighted instance.
\end{theorem}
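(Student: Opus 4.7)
The plan is to establish the stronger bound $c_i(X_i) \leq 2 s_i$ for every $i \in N_2$, which immediately yields $c_i(X_i\setminus\{e\}) \leq c_i(X_i) \leq 2\,\WPROP_i$ for every $e \in X_i$. For $i \in N_1$ we have $|X_i|=1$, so $c_i(X_i\setminus\{e\}) = 0$ and WPROPX holds trivially. Polynomial runtime is immediate since the algorithm performs a single scan over the $m$ items.

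Fix $i \in N_2$, let $e_1$ be the costliest item in $X_i$, and set $a = c_i(e_1)$, $b = c_i(X_i\setminus\{e_1\})$, $C_1 = \sum_{j \in N_1} c_i(X_j) = \sum_{j=1}^{i^*} c_{ij}$, and $w_2 = \sum_{j \in N_2} s_j$. The edge case $|N_2|=1$ is immediate: $s_i = w_2 = 1 - w_1 \geq 1/2$ by the first bullet above, so $c_i(X_i) \leq 1 \leq 2 s_i$. Otherwise $|N_2| \geq 2$: I apply Lemma~\ref{lemma:wpropx-for-N2} to each $j \in N_2 \setminus \{i\}$, multiply by $s_j$, and sum to obtain $(w_2 - s_i) b \leq s_i(1 - C_1 - a - b)$, which rearranges to $w_2 \cdot b \leq s_i(1 - C_1 - a)$.

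Two IDO-derived inequalities then feed into this bound. Since $X_i \subseteq M_S$, every $j \in [i^*]$ satisfies $c_{ij} \geq c_i(e_1) = a$, so $C_1 \geq i^* a$. Next, $c_{i,i^*+1}$ is the largest cost in $M_S$, and averaging the top $i^*+1$ item-costs gives $a \leq c_{i,i^*+1} \leq 1/(i^*+1)$; in particular $1 - (i^*+1)a \geq 0$. Combining both with $w_2 \geq 1/2$ (so $s_i/w_2 \leq 2 s_i$), we get $b \leq 2 s_i(1 - (i^*+1) a)$, hence
\begin{equation*}
c_i(X_i) \;=\; a + b \;\leq\; 2 s_i + a\bigl(1 - 2 s_i(i^*+1)\bigr) \;\leq\; 2 s_i,
\end{equation*}
the last step invoking the bullet preceding Algorithm~\ref{alg:ordinal:weighted} that $s_i \geq 1/(2(i^*+1))$ for every $i \in N_2$, which forces the coefficient of $a$ non-positive.

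I expect the delicate part to be threading these three bounds together in just the right order. Lemma~\ref{lemma:wpropx-for-N2} alone only controls $b$ through the $N_2$-bundles and so cannot, by itself, absorb the cost mass sitting in the $N_1$-bundles; injecting $C_1 \geq i^* a$ transfers that mass back into the inequality, and then the upper bound $a \leq 1/(i^*+1)$ together with the share lower bound $s_i \geq 1/(2(i^*+1))$ are exactly what is needed to collapse the approximation factor to $2$. It is the algorithm's cut at cumulative share $1/2$ (which drives both $w_2 \geq 1/2$ and $s_i \geq 1/(2(i^*+1))$) that synchronises these constants.
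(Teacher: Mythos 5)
Your proof is correct and follows essentially the same route as the paper's: both rest on Lemma~\ref{lemma:wpropx-for-N2} summed over $N_2$, together with the three facts $\sum_{j\in N_2}s_j\ge 1/2$, $c_i(M_L)\ge i^*\cdot c_{ii}$, and $s_i\ge \frac{1}{2(i^*+1)}$. The only difference is a harmless algebraic rearrangement (you fold $c_i(M_L)\ge i^* a$ into the bound on $b$ and check the sign of the coefficient of $a$, which is why you also need $a\le 1/(i^*+1)$, whereas the paper bounds $c_{ii}$ separately by $\frac{c_i(M_L)+c_{ii}}{i^*+1}$).
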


\begin{proof}
	As before, it suffices to show that the allocation is 2-WPROP for agents in $N_2$, i.e., $c_i(X_i) \leq 2\cdot s_i$ for all $i\in N_2$.
	
	In Algorithm~\ref{alg:ordinal:weighted}, each agent $i\in N_2$ receives item $i\in M_S$ as her first item.
	Next we upper bound the total cost agent $i$ receives excluding item $i$.

	By Lemma~\ref{lemma:wpropx-for-N2}, we have 
	\begin{align*}
		\sum_{j\in N_2} \{ \frac{s_j}{s_i} \cdot c_i( X_i \setminus \{i\} ) \} 
		&\leq c_i( X_i \setminus \{i\} ) + \sum_{j\in N_2\setminus \{i\} } c_i(X_j) \\
		&= c_i(M_S) - c_{ii}.
	\end{align*}
	
	Reordering the above inequality, we have 
	\begin{align*}
	    c_i(X_i\setminus\{ i \}) &\leq \frac{s_i}{\sum_{j\in N_2} s_j}\cdot (c_i(M_S) - c_{ii}) \\
	    &\leq 2\cdot s_i\cdot (c_i(M_S) - c_{ii}),
	\end{align*}
    where the second inequality is because of $\sum_{j\in N_2} s_j \geq \frac{1}{2}$.

	Since every item in $M_L$ has cost at least $c_{ii}$ under the cost function of agent $i$, we have $c_i(M_L) \geq i^*\cdot c_{ii}$.
	Therefore we have
	\begin{align*}
		c_i(X_i) & = c_i( X_i\setminus \{i\} ) + c_{ii} \\
		& \leq  2\cdot s_i\cdot (c_i(M_S) - c_{ii}) + \frac{c_i(M_L)+c_{ii}}{i^* + 1} \\
		& \leq 2\cdot s_i \cdot (c_i(M_S) - c_{ii}) + 2\cdot s_i \cdot (c_i(M_L) + c_{ii})\\
		&\leq 2\cdot s_i,
	\end{align*} 
	where the second last inequality is due to $s_i \geq \frac{1}{2(i^*+1)}$.
\end{proof}

\section{Price of Fairness}

In this section, we show that the allocation returned by Algorithm~\ref{alg:bid-and-take} achieves the optimal price of fairness (PoF) among all WPROPX allocations.
Price of fairness is used to measure how much social welfare we lose if we want to maintain fairness among the agents.
Recall that we assume $c_i(M) = 1$ for all agents $i\in N$.
Let $\Omega(\cI)$ be the set of all WPROPX allocations for instance $\cI$, the price of fairness is defined as the worst-case ratio between the optimal (minimum) social cost $\opt(\cI)$ without any constraints and the social cost under WPROPX allocations:
\[
\POF = \max_{\cI} \min_{\bX \in \Omega(\cI)} \dfrac{\sc(\bX)}{\opt(\cI)}.
\]
Note that it is easy to see that for any instance $\cI$, $\opt(\cI)$ is obtained by allocating every item to the agent who has smallest cost on it. 
Moreover, the assumption of $c_i(M) = 1$ for all agent $i\in N$ is necessary: if there exist two agents $i$ and $j$ having very different values of $c_i(M)$ and $c_j(M)$, then we have unbounded \POF even in the unweighted and IDO setting because the social optimal allocation can allocate all items to one agent while WPROPX allocations cannot.

\begin{lemma}\label{lemma:sc-of-alg}
    Letting $\bX$ be the allocation returned by Algorithm~\ref{alg:bid-and-take}, we have $\sc(\bX) \leq 1$.
\end{lemma}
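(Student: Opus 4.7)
The plan is to identify a single agent $i^* \in N$ who remains active during every iteration of Algorithm~\ref{alg:bid-and-take}, and then leverage the greedy selection rule to dominate the algorithm's per-item cost by $c_{i^*}$'s cost item by item, so the sum telescopes into $c_{i^*}(M)=1$.

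First, I will argue that such an $i^*$ exists. If some agent is never inactivated throughout the execution, she is a valid choice. Otherwise every agent is eventually inactivated; let $i^*$ be the one inactivated last, right after item $t^*$ is handed out. The non-emptiness argument buried inside the proof of Lemma~\ref{theorem:wpropx-computation} (namely, were $A = \emptyset$ while an item remained unallocated, then at the instant $i^*$ became inactive one would derive $1 = c_{i^*}(M) > 1$) forces $t^* = m$. Thus in either case, $i^*$ lies in the active set $A_j$ at the moment each item $j \in \{1,\ldots,m\}$ is considered.

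Next, the selection rule of the algorithm guarantees that
\[
c_{i_j j} \;=\; \min_{i \in A_j} c_{ij} \;\leq\; c_{i^* j} \quad \text{for every } j \in \{1,\ldots,m\},
\]
where $i_j$ denotes the recipient of item $j$. Summing this inequality over all items immediately yields
\[
\sc(\bX) \;=\; \sum_{j=1}^m c_{i_j j} \;\leq\; \sum_{j=1}^m c_{i^* j} \;=\; c_{i^*}(M) \;=\; 1,
\]
which is exactly the claim.

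The only non-routine piece is the existence of the always-active agent $i^*$, and even that is essentially a rephrasing of the feasibility analysis already established for Lemma~\ref{theorem:wpropx-computation}; I do not anticipate any further obstacle.
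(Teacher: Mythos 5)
Your proof is correct and follows essentially the same route as the paper's: both identify an agent who is active throughout the entire execution (the paper takes the recipient of the last item, you take the last agent to be inactivated or one never inactivated) and use the greedy selection rule to bound every assigned cost by that agent's cost, so the total telescopes to $c_{i^*}(M)=1$. The only difference is that you compare costs item by item while the paper aggregates per bundle via Claim~\ref{claim:active-inactive-agents}; the substance is identical.
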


\begin{proof}
    Let $i\in N$ be the agent that receives the last item $m$.
    By Claim~\ref{claim:active-inactive-agents}, for all agent $j\neq i$, we have $c_i(X_j) \geq c_j(X_j)$, because agent $i$ is active throughout the whole allocation process.
    Hence we have $\sc(\bX) = \sum_{j\in N} c_j(X_j) \leq  \sum_{j\in N} c_i(X_j) = c_i(M) = 1$.
\end{proof}


We show that the allocation returned by Algorithm~\ref{alg:bid-and-take} achieves the optimal \POF.

\begin{theorem}\label{theorem:unweighted-pof}
    For the unweighted case ($s_i = 1/n$ for all $i\in N$), the PROPX allocation returned by Algorithm~\ref{alg:bid-and-take} achieves the optimal \POF, which is $\Theta(n)$.
\end{theorem}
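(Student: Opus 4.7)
The plan is to prove matching $O(n)$ and $\Omega(n)$ bounds.

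For the upper bound I would show that Algorithm~\ref{alg:bid-and-take}'s output $\bX$ satisfies $\sc(\bX) \le n\cdot\opt(\cI)$ on every instance, by splitting on $\opt(\cI)$. When $\opt(\cI) \ge 1/n$, Lemma~\ref{lemma:sc-of-alg} immediately gives $\sc(\bX) \le 1 \le n\cdot\opt(\cI)$. The interesting case is $\opt(\cI) < 1/n$, where I would argue by contradiction that no agent is ever deactivated. Suppose agent $i$ is the first to be deactivated, at step $\tau$: then at every prior step the entire set $N$ was active, so the algorithm picked the \emph{globally} cheapest agent, and each item assigned to $i$ satisfies $c_{i,j} = \min_{i'\in N} c_{i',j}$. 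Hence $c_i(X_i)$ is bounded by a partial sum of $\opt(\cI) = \sum_j \min_{i'} c_{i',j}$, so $c_i(X_i) \le \opt(\cI) < 1/n$, contradicting the deactivation condition $c_i(X_i) > 1/n$. With no deactivations the algorithm assigns every item to its globally optimal agent, so $\sc(\bX) = \opt(\cI)$.

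For the lower bound I would construct, for each $n$, an IDO instance with $m = n^2$ items defined by $c_{1,1} = 1 - 2(m-1)/(nm)$, $c_{1,j} = 2/(nm)$ for $j\ge 2$, and $c_{k,j} = 1/m$ uniformly for every $k\ge 2$. Routing item~$1$ to any uniform agent and items $2,\ldots,m$ to agent~$1$ witnesses $\opt(\cI)\to 2/n$. To lower-bound the social cost of every PROPX allocation, I would case on whether $1\in X_1$: since $c_{1,1}\approx 1 \gg 1/n$, PROPX forces $X_1 = \{1\}$ in this case, pushing all of $\{2,\ldots,m\}$ onto the uniform agents at total cost $\to 1$; otherwise $X_1\subseteq\{2,\ldots,m\}$ and the PROPX constraint $(|X_1|-1)\cdot 2/(nm) \le 1/n$ caps $|X_1|\le m/2+1$, so at least roughly $m/2$ items from $\{2,\ldots,m\}$ together with item~$1$ are forced onto the uniform agents at cost $1/m$ each, contributing $\ge 1/2 - o(1)$. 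In either case $\sc(\bX') \ge 1/2 - o(1)$ for every PROPX allocation $\bX'$, while $\opt(\cI)\to 2/n$, yielding a ratio of $\Omega(n)$.

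The main obstacle is the lower-bound case analysis: one must confirm that no PROPX allocation (beyond the two natural candidates) beats the $1/2$ barrier, and that the parameters $m=n^2$ and $c_{1,j}=2/(nm)$ are calibrated so that (i) agent~$1$'s PROPX cap is tight enough to block the absorber strategy that would otherwise match the optimum, and (ii) the uniform agents' caps $|X_k|\le m/n+1$ aggregate to at least $n^2-1$, just enough to accommodate the residual items and maintain feasibility.
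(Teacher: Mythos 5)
Your proposal is correct and follows essentially the same route as the paper: the upper bound combines Lemma~\ref{lemma:sc-of-alg} with the observation that no agent is ever deactivated when $\opt(\cI)<1/n$ (so the algorithm tracks the social optimum exactly), and the lower bound uses a single ``specialist'' agent whose PROPX cap of roughly half the items forces the remainder onto uniform agents at cost $1/n$ apiece. The only difference is cosmetic --- your hard instance uses $m=n^2$ items where the paper uses $m=n$ --- and your exhaustive case split on whether item $1$ lies in $X_1$ already covers all PROPX allocations, so the worry you flag about ``other candidates'' is not an actual gap.
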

\begin{proof}
    We first prove that the \POF is $\Omega(n)$ by giving an unweighted instance $\cI$ for which any PROPX allocation $\bX$ satisfies $\sc(\bX) \geq (n/6)\cdot \opt(\cI)$.
    In $\cI$, we have $n$ agents and $m = n$ items with cost functions shown in the table below. 
   	\begin{center}
    	\begin{tabular}{c|ccccc}
    		Agent & $c_{i1}$ & $\ldots$ & $c_{i, n-1}$ &  $c_{in}$ \\
    		\hline
    		$1$ & $2/n^2$ & $\ldots$ & $2/n^2$ & $1-{2(n-1)}/{n^2}$ \\
    		\hline
    		$2$ & $1/n$ & $\ldots$ & $1/n$ & $1/n$ \\
    		\hline
    		$\vdots$ & $\vdots$ & $\ddots$ & $\vdots$ & $\vdots$ \\
    		\hline
    		$n$ & $1/n$ & $\ldots$ & $1/n$ & $1/n$
    	\end{tabular}
    \end{center}

	For the above instance we have
	\begin{equation*}
	\opt(\cI) = (n-1)\cdot \frac{2}{n^2} + \frac{1}{n} < \frac{3}{n}.
	\end{equation*}
	
	However, any PROPX allocation $\bX$ allocates at most $n/2 + 1$ items to agent $1$ because otherwise the bundle she receives has cost larger than $1/n$ even after removing one item.
	Hence we have
    \begin{equation*}
    \sc(\bX) \geq (\frac{n}{2}+1)\cdot \frac{2}{n^2} + (\frac{n}{2} - 1) \cdot \frac{1}{n} > \frac{1}{2} \geq \frac{n}{6}\cdot \opt(\cI).
    \end{equation*}
    
    Next we show that for any unweighted instance $\cI$, the allocation $\bX$ computed by Algorithm~\ref{alg:bid-and-take} satisfies 
    \[
    \sc(\bX) \leq n\cdot \opt(\cI).
    \]
    
    By Lemma~\ref{lemma:sc-of-alg}, it suffices to consider the case when $\opt(\cI) < 1/n$.
    We show that in this case, we have $\sc(\bX) = \opt(\cI)$.
    This is because, before any agent becomes inactive, we always allocate an item to the agent that has smallest cost on the item, as in the social optimal allocation.
    Since $\opt(\cI) < 1/n$, Algorithm~\ref{alg:bid-and-take} never turns any agent into inactive, which implies that $\sc(\bX) = \opt(\cI)$.
\end{proof}

We note that the hard instance to show Theorem \ref{theorem:unweighted-pof} is IDO, which means the \POF is $\Theta(n)$ even for the unweighted IDO instances.
For weighted case, we have the following.

\begin{theorem}\label{theorem:weighted-pof}
    For the weighted case, we have unbounded \POF. For IDO instances, Algorithm~\ref{alg:bid-and-take} computes a WPROPX allocation with optimal \POF, which is $\Theta(m)$.
\end{theorem}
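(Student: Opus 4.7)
My plan is to prove the three parts separately: (i) unbounded $\POF$ for general weighted instances, (ii) an $O(m)$ upper bound on the $\POF$ attained by Algorithm~\ref{alg:bid-and-take} on weighted IDO instances, and (iii) a matching $\Omega(m)$ lower bound on weighted IDO.

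For part (i), I would construct a parameterized family of (non-IDO) instances $\cI_K$ with $n=2$ and $m=3$: set $c_{1,1}=1/K$, $c_{1,2}=c_{1,3}=(K-1)/(2K)$, $c_{2,1}=(K-1)/K$, $c_{2,2}=c_{2,3}=1/(2K)$, and shares $s_1=1-1/(4K)$, $s_2=1/(4K)$. A direct calculation gives $\opt(\cI_K)=2/K$ (item 1 to agent 1, items 2 and 3 to agent 2). Because $s_2$ is so tiny, every two-item bundle violates agent 2's WPROPX constraint, forcing $|X_2|\le 1$. Enumerating the four cases $X_2\in\{\emptyset,\{1\},\{2\},\{3\}\}$ and verifying agent 1's WPROPX constraint for the complementary $X_1$ yields social costs $1$, $2-2/K$, $1/2+1/K$, and $1/2+1/K$ respectively; hence the minimum WPROPX social cost is $1/2+1/K$, and the ratio to $\opt$ is at least $K/4+1/2$, which diverges as $K\to\infty$.

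For part (ii), Lemma~\ref{lemma:sc-of-alg} gives $\sc(\bX)\le 1$ for the allocation $\bX$ returned by Algorithm~\ref{alg:bid-and-take}. In any IDO instance, the normalization $c_i(M)=1$ together with $c_{i,1}=\max_j c_{i,j}$ forces $c_{i,1}\ge 1/m$ for every $i$, which in turn yields $\opt(\cI)\ge\min_i c_{i,1}\ge 1/m$. Therefore $\sc(\bX)/\opt(\cI)\le m$. For part (iii), I would reuse the hard instance from the proof of Theorem~\ref{theorem:unweighted-pof}: it is IDO, has $n=m$ items, and already exhibits $\POF=\Omega(n)=\Omega(m)$ even in the unweighted setting; viewing it as a weighted instance with uniform shares $s_i=1/n$ transfers the $\Omega(m)$ lower bound to weighted IDO and matches part (ii).

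The main obstacle is the case analysis in part (i): one must rule out every WPROPX allocation with social cost substantially smaller than $1/2+1/K$, which reduces to enumerating the feasible choices of $X_2$ forced by agent 2's tiny share and then checking agent 1's WPROPX for each complementary $X_1$.
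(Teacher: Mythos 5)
Your parts (i) and (ii) are correct and essentially reproduce the paper's argument. For (i) the paper uses the same mechanism with a slightly simpler gadget ($s_1=1-\epsilon^2$, $s_2=\epsilon^2$, agent 1's costs $(0.5,0.5,0)$, agent 2's costs $(\epsilon,\epsilon,1-2\epsilon)$): the tiny share forces $|X_2|\le 1$, so agent 1 must absorb a cost-$0.5$ item while $\opt=2\epsilon$. Your parameterized family does the same job after the case analysis you describe, which does go through. Part (ii) is identical to the paper: $\sc(\bX)\le 1$ from Lemma~\ref{lemma:sc-of-alg} and $\opt(\cI)\ge \min_i c_{i1}\ge 1/m$ for IDO instances.

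Part (iii) is where you genuinely deviate, and it is the weak point. Recycling the unweighted instance from Theorem~\ref{theorem:unweighted-pof} yields a ratio of $\Omega(m)$ only because that instance has $n=m$; what you have really re-proved is the $\Omega(n)$ bound, specialized to $n=m$. This leaves open the possibility that the weighted IDO \POF{} is $\Theta(\min(n,m))$ rather than $\Theta(m)$, and it says nothing in the regime where $n$ is small and $m$ is large --- which is precisely where the weighted bound is supposed to separate from the unweighted $\Theta(n)$ bound. If one reads ``$\Theta(m)$'' as the supremum over all instances with $m$ items and an unconstrained number of agents, your argument is technically sufficient, but it misses the content of the theorem. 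The paper instead exhibits a \emph{two-agent} weighted IDO instance: $c_1=(0.5,0.5,0,\ldots,0)$, $c_2=(1/m,\ldots,1/m)$, $s_1=1-1/m^2$, $s_2=1/m^2$. Here $\opt(\cI)=2/m$, yet any WPROPX allocation gives agent 2 at most one item, so agent 1 receives one of the two cost-$0.5$ items and $\sc(\bX)\ge 1/2\ge (m/4)\cdot\opt(\cI)$. You should use (or at least add) such an instance with $n$ fixed to make the $\Theta(m)$ claim meaningful.
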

\begin{proof}
    We first show that the \POF is unbounded for weighted non-IDO instances by giving the following hard instance $\cI$ (with $s_1 = 1-\epsilon^2$ and $s_2 = \epsilon^2$) shown in the table below.
	It is easy to see that $\opt(\cI) = 2\epsilon$.
	However, since any WPROPX allocation $\bX$ allocates at most one item to agent $2$, we have $\sc(\bX) \geq 1/2$. 
	Since $\epsilon > 0$ can be arbitrarily close to $0$, we have an unbounded \POF for the weighted non-IDO instances.
    \begin{center}
    	\begin{tabular}{c|ccc}
    		Agent & $c_{i1}$ & $c_{i2}$ & $c_{i3}$ \\
    		\hline
    		$1$ & $0.5$ & $0.5$ & $0$ \\
    		\hline
    		$2$ & $\epsilon$ & $\epsilon$ & $1-2\epsilon$
    	\end{tabular}
    \end{center}
	
	Next we show that the \POF is $\Omega(m)$ for weighted IDO instances, by giving the following hard instance.
    \begin{center}
    	\begin{tabular}{c|ccccc}
    		Agent & $c_{i1}$ & $c_{i2}$ & $c_{i3}$ & $\ldots$ & $c_{im}$ \\
    		\hline
    		$1$ & $0.5$ & $0.5$ & $0$ & $\ldots$ & $0$ \\
    		\hline
    		$2$ & $1/m$ & $1/m$ & $1/m$ & $\ldots$ & $1/m$
    	\end{tabular}
    \end{center}
    
    It is easy to see that $\opt(\cI) = 2/m$.
	However, for $s_1 = 1-1/m^2$ and $s_2 = 1/m^2$, since any WPROPX allocation $\bX$ allocates at most one item to agent $2$, we have $\sc(\bX) \geq 1/2 \geq m/4\cdot \opt(\cI)$. 

    Finally, we show that the weighted IDO instances the allocation $\bX$ returned by Algorithm~\ref{alg:bid-and-take} satisfies $\sc(\bX) \leq m\cdot \opt(\cI)$.
    Recall that by Lemma~\ref{lemma:sc-of-alg}, we have $\sc(\bX) \leq 1$.
    On the other hand, for IDO instances,
    \begin{equation*}
        \opt(\cI) = \sum_{j\in M} \min_{i\in N} \{ c_{ij} \} \geq \min_{i\in N} \{ c_{i1} \} \geq \frac{1}{m} \geq \frac{1}{m}\cdot \sc(\bX).
    \end{equation*}
    Hence allocation $\bX$ achieves the asymptotically optimal \POF.
\end{proof}

\section{Conclusion}

In this paper, we study the fair allocation problem of indivisible chores under the fairness notion of proportionality up to any item (PROPX).
We show that PROPX allocations exist and can be computed efficiently for both symmetric and asymmetric agents.
The returned allocations achieve the optimal guarantee on the price of fairness.
As byproducts, our results imply a 2-approximate algorithm for APS allocations for general instances, and the existence of EFX and weighted EF1 allocations for IDO instances. 
Finally, we design the optimal approximation algorithms to compute (weighted) PROPX allocations with ordinal preferences. 

There are many future directions that are worth  effort. 
To name a few, as we have discussed, the existence or approximation of EFX and weighted EF1 allocations are less explored for chores than those for goods.
Furthermore, we proved that any WPROPX allocation is 2-approximate APS, but it does not have good guarantee for weighted MMS defined in \cite{conf/ijcai/0001C019}.
It is still unknown whether weighted MMS admits constant approximations.
Finally, it is also interesting to investigate the compatibility between PROPX and efficiency notions such as Pareto optimality. 



\bibliographystyle{abbrv}

\bibliography{propx}

\newpage

\appendix

\section{Missing Proofs}

\subsection{Proof of Lemma \ref{lem:IDO:EFX}}

\begin{proof}
    Envy cycle elimination algorithm runs in polynomial time, which has been proved in \cite{conf/sigecom/LiptonMMS04,DBLP:journals/corr/abs-2012-06788}. In the following, we prove by induction that the returned allocation is EFX.
    First, if no item is allocated to any agent, the allocation is trivially EFX.
	Let~$\bX$ be a partial and EFX allocation at the beginning of any round in Algorithm \ref{alg:envy-cycle}. 
	Let $X_0$ be the set of all unassigned items. 
	We prove that at the end of this round, the new partial allocation is also EFX. We show this by proving the following two claims.
	
	\begin{claim}\label{claim:envy-cycle:add}
		Adding a new item to the allocation preserves EFX.
	\end{claim}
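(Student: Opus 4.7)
The plan is to verify the EFX inequalities directly, by case analysis on which bundle the removed item belongs to. Write $\bX$ for the EFX partial allocation at the start of the current iteration, let $k$ be the sink chosen in $G_{\bX}$, and let $j^*$ be the item being added (so $X'_k = X_k \cup \{j^*\}$ and $X'_\ell = X_\ell$ for $\ell \ne k$). Two structural facts drive the argument. First, because $k$ is a sink, it envies no one, so $c_k(X_k) \le c_k(X_\ell)$ for every $\ell \in N$. Second, because Algorithm~\ref{alg:envy-cycle} processes items in the IDO order from highest to lowest cost, the newly added item satisfies $c_i(j^*) \le c_i(e)$ for every agent $i$ and every item $e$ already placed in some bundle of $\bX$.

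The easy direction concerns any agent $i \ne k$. For $\ell \ne i$ and $e \in X'_i = X_i$, we need $c_i(X'_i \setminus \{e\}) \le c_i(X'_\ell)$. When $\ell \ne k$ this is the EFX inequality for $\bX$ verbatim; when $\ell = k$, we have $c_i(X'_k) = c_i(X_k) + c_i(j^*) \ge c_i(X_k)$, so the same EFX inequality for $\bX$ suffices.

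The non-trivial direction is $i = k$. For any $\ell \ne k$ and $e \in X'_k$ we must show $c_k(X'_k \setminus \{e\}) \le c_k(X_\ell)$. If $e = j^*$ then $X'_k \setminus \{j^*\} = X_k$ and the inequality becomes $c_k(X_k) \le c_k(X_\ell)$, which is exactly the first fact. If instead $e \in X_k$ then
\[
c_k(X'_k \setminus \{e\}) = c_k(X_k \setminus \{e\}) + c_k(j^*) \le c_k(X_k \setminus \{e\}) + c_k(e) = c_k(X_k) \le c_k(X_\ell),
\]
where the middle step uses the second fact ($c_k(j^*) \le c_k(e)$ for every previously allocated item $e$) and the last step uses the first fact again.

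The main obstacle lies in this last sub-case: without both the IDO assumption \emph{and} the ``allocate the highest-cost remaining item'' rule of Algorithm~\ref{alg:envy-cycle}, the quantity $c_k(j^*)$ could exceed $c_k(e)$, making $X'_k \setminus \{e\}$ strictly costlier than $X_k$ and allowing EFX to be violated. This is precisely why the same envy-cycle procedure is known to yield only EF1 on arbitrary instances, while the careful item-selection rule here strengthens the guarantee to EFX on IDO instances.
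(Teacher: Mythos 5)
Your proof is correct and follows essentially the same route as the paper's: the key step in both is that the item being added has the smallest cost (for the sink) among all items in the sink's new bundle, so removing any item leaves cost at most $c_k(X_k)$, which the sink property bounds by $c_k(X_\ell)$. You are slightly more explicit than the paper about the trivial case of non-sink agents, but the substance is identical.
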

	
	Let $i$ be any sink agent in $G_X$. By definition $c_i(X_i) \le c_i(X_j)$ for all $j\in N$.
	Let $e$ be the item with largest cost in $X_0$, which will be added to $X_i$.
	Since the items are assigned from the most costly to least costly, and all agents have the same ordinal preference, $c_i(e') \ge c_i(e)$ for all $e' \in X_i$. Thus for any $e' \in X_i \cup \{e\}$ and any $j \neq i$,
	\[
	c_i(X_i \cup \{e\} \setminus \{e'\}) \le c_i(X_i \cup \{e\} \setminus \{e\}) = c_i(X_i) \le c_i(X_j).
	\]
	Thus Claim \ref{claim:envy-cycle:add} holds.
	
	\begin{claim}\label{claim:envy-cycle:resolve}
	Resolving a top-trading envy cycle preserves EFX.
	\end{claim}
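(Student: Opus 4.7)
The plan is to verify the EFX condition for every pair of agents after the cycle swap, by splitting into two cases according to whether the agent whose bundle we inspect is in the cycle $C = (i_1, \ldots, i_d)$ or not. The key observation driving the argument is that in a top-trading envy cycle, each $i_j$ most-envies $i_{j+1}$, which by definition means $X_{i_{j+1}}$ is a minimum-cost bundle for $i_j$ among \emph{all} bundles in $\bX$, i.e. $c_{i_j}(X_{i_{j+1}}) \le c_{i_j}(X_k)$ for every $k \in N$. So after the swap each cycle agent actually receives her most preferred original bundle.

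First I would handle the case where the inspected agent $a = i_j$ lies on the cycle. Then $X^C_{i_j} = X_{i_{j+1}}$, and for any $e \in X^C_{i_j}$ and any other agent $b$ I want to show $c_{i_j}(X^C_{i_j} \setminus \{e\}) \le c_{i_j}(X^C_b)$. Dropping $e$ only decreases cost, so
\[
c_{i_j}(X^C_{i_j} \setminus \{e\}) \le c_{i_j}(X_{i_{j+1}}) \le c_{i_j}(X_k) \quad \text{for all } k \in N,
\]
using the most-envies property. Now $X^C_b$ equals either $X_b$ (if $b \notin C$) or $X_{i_{\ell+1}}$ (if $b = i_\ell \in C$); in either case $X^C_b$ is one of the original bundles, so the displayed inequality applies with $k = b$ or $k = i_{\ell+1}$ and the EFX condition holds for $i_j$.

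Next I would handle the case $a \notin C$. Here $X^C_a = X_a$, so I need $c_a(X_a \setminus \{e\}) \le c_a(X^C_b)$ for every $e \in X_a$ and $b \neq a$. The original allocation $\bX$ is EFX by the inductive hypothesis, hence $c_a(X_a \setminus \{e\}) \le c_a(X_k)$ for every $k \in N$. As before $X^C_b$ is always one of the original bundles $X_k$, so the EFX inequality against $X^C_b$ follows immediately.

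The main obstacle is essentially bookkeeping: one has to be careful that in both sub-cases the bundle $X^C_b$ one compares against is always an original bundle from $\bX$ (possibly shifted along the cycle), so that we can legitimately invoke either the most-envies minimality (for cycle agents) or the inherited EFX property (for non-cycle agents). No cardinal manipulation or IDO-specific ordering is needed in this claim — the argument works for any additive cost functions — so once the case split is set up correctly the inequalities are one-liners.
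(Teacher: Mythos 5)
Your proof is correct and follows essentially the same approach as the paper: a case split on whether the inspected agent lies on the cycle, using the fact that a cycle agent receives her minimum-cost original bundle (by the definition of most-envy) and that a non-cycle agent keeps her bundle while the multiset of bundles is merely permuted. Your write-up is somewhat more explicit than the paper's about why every post-swap bundle $X^C_b$ is an original bundle, but the underlying argument is identical.
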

	
	Suppose we reallocate the bundles according to a top-envy cycle $C=(i_1,\ldots, i_d)$ in $G_X$.
	For any agent $i$ who is not in the cycle, her bundle is not changed by the reallocation. 
	Although other bundles are reallocated, the items in each bundle is not changed and thus the cycle-swapped allocation is still EFX for agent $i$. 
	For any agent $i$ in $C$, she will obtain her best bundle in this partial allocation $\bX$, and hence the cycle-swapped allocation is EF for agent $i$. 
	Thus Claim \ref{claim:envy-cycle:resolve} holds.
	
	Combining the two claims, at the end of each round, the partial allocation remains EFX.
\end{proof}



\subsection{Proof of Lemma \ref{lem:reduction}}
\begin{proof}
    In the following, we explicitly write $\cI = (N, \mathbf{s}, M, \mathbf{c})$ to denote an instance with item set $M$, agent set $N$, weight vector $\mathbf{s} = (s_1,\cdots,s_n)$, and cost functions $\mathbf{c} = (c_1,\cdots,c_n)$.
    Given any instance $\cI = (N, \mathbf{s}, M, \mathbf{c})$, we construct an IDO instance $\cI' = (N, \mathbf{s}, M, \mathbf{c}')$ where $\mathbf{c}'=(c_1'\cdots,c_n')$ is defined as follows.
	Let $\sigma_i(j)\in M$ be the $j$-th most costly item under cost function $c_i$.
	Let
	\begin{equation*}
		c'_{ij} = c_{i \sigma_i(j)}.
	\end{equation*}
	
	Thus with cost functions $\mathbf{c}'$, the instance $\cI'$ is IDO, in which all agents $i$ has
	\begin{equation*}
		c'_{i1}\geq c'_{i2}\geq \ldots \geq c'_{im}.
	\end{equation*}
	
	Then we run the algorithm for IDO instances on instance $\cI'$, and get an $\alpha$-WPROPX allocation $\bX'$ for $\cI'$.
	By definition, for all agent $i\in N$ we have
	\begin{equation*}
	    c'_i(X'_i - e) \leq \alpha\cdot s_i, \qquad \forall e\in X'_i.
	\end{equation*}
	
	In the following, we use $\bX'$ to guide us on computing a $\alpha$-WPROPX allocation $\bX$ for instance $\cI$.
	
	Recall that in the IDO instance $\cI'$, for all agents, item $1$ has the maximum cost and item $m$ has the minimum cost.
	We initialize $X_i = \emptyset$ for all $i\in N$ and let $X_0 = M$ be the unallocated items.
	Sequentially for item $j$ from $m$ to $1$, we let the agent $i$ that receives item $j$ under allocation $\bX'$, i.e., $j\in X'_i$, pick her favourite unallocated item.
	Note that the order of items are well-defined in the IDO instance $\cI'$.
	Specifically, we move item $e = \arg\min_{e'\in X_0} \{ c_{i e'} \}$ from $X_0$ to $X_i$.
	Thus we have $|X_i| = |X'_i|$ for each agent $i\in N$.
	Furthermore, we show that there is a bijection $f_i: X_i \rightarrow X_i'$ such that for any item $e \in X_i$, we have $c_{ie} \le c'_{ij}$, where $j = f_i(e)$.
	Recall that $c'_{ij} \geq c'_{ik}$ for all $k \geq j$.
	By the way $c'_i$ is constructed, we know that there are at least $m - j + 1$ items that have cost at most $c'_{ij}$, under cost function $c_i$.
	Observe that when $e$ is chosen from $X_0$, we have $|X_0| \geq j$.
	Hence there must exists an item $e'$ in $X_0$ with cost $c_{i e'}\leq c'_{ij}$. Since $e$ has minimum cost among items in $X_0$ under cost function $c_i$, we have $c_{ie} \leq c'_{ij}$.
	Therefore, for any agent $i$ and any $e \in X_i$, we have
	\begin{equation*}
	    c_i(X_i - e) \le c'_i(X'_i - f_i(e)) \le \alpha\cdot s_i,
	\end{equation*}
	where the last inequality follows because $\bX'$ is $\alpha$-WPROPX for instance $\cI'$.
	Finally, it is easy to verify that the MMS benchmark in both instances are the same, while the cost of each agent in the general instance is smaller. 
	Thus the algorithm guarantees the same approximation ratio for MMS.
\end{proof}

\end{document}